\documentclass{lmcs}

\keywords{computational paths, weak omega-groupoid, identity types, type theory, higher category theory, coherence laws}

\usepackage{mathtools}
\usepackage{tikz-cd}
\usepackage{xcolor}

\newcommand{\Id}{\mathrm{Id}}
\newcommand{\refl}{\rho}
\newcommand{\symm}{\sigma}
\newcommand{\trans}{\tau}
\newcommand{\Rw}{\rightsquigarrow}
\newcommand{\RwEq}{\sim}
\newcommand{\Path}{\mathrm{Path}}
\newcommand{\Cell}{\mathrm{Cell}}
\newcommand{\src}{\mathrm{src}}
\newcommand{\tgt}{\mathrm{tgt}}
\newcommand{\comp}{\circ}
\newcommand{\inv}{\mathrm{inv}}
\newcommand{\id}{\mathrm{id}}
\newcommand{\assoc}{\alpha}
\newcommand{\lunit}{\lambda}
\newcommand{\runit}{\varrho}
\newcommand{\linv}[1]{\iota_{L,#1}}
\newcommand{\rinv}[1]{\iota_{R,#1}}


\begin{document}

\title[Computational Paths Form a Weak $\omega$-Groupoid]{Computational Paths Form a Weak \texorpdfstring{$\omega$}{omega}-Groupoid}

\author[A.~F.~Ramos]{Arthur F. Ramos}
\address{Microsoft, USA}
\email{arfreita@microsoft.com}

\author[T.~M.~L.~de~Veras]{Tiago M. L. de Veras}
\address{Departamento de Matem\'atica, Universidade Federal Rural de Pernambuco, Brazil}
\email{tiago.veras@ufrpe.br}

\author[R.~J.~G.~B.~de~Queiroz]{Ruy J. G. B. de Queiroz}
\address{Centro de Inform\'atica, Universidade Federal de Pernambuco, Brazil}
\email{ruy@cin.ufpe.br}

\author[A.~G.~de~Oliveira]{Anjolina G. de Oliveira}
\address{Centro de Inform\'atica, Universidade Federal de Pernambuco, Brazil}
\email{ago@cin.ufpe.br}

\begin{abstract}
Lumsdaine (2010) and van den Berg-Garner (2011) proved that types in Martin-L\"of type theory carry the structure of weak omega-groupoids. Their proofs, while foundational, rely on abstract properties of the identity type without providing explicit computational content for coherence witnesses.

We establish an analogous result for computational paths---an alternative formulation of equality where witnesses are explicit sequences of rewrites from the LND-EQ-TRS term rewriting system. Our main result is that computational paths on any type form a weak omega-groupoid with fully explicit coherence data. The groupoid operations---identity, composition, and inverse---are defined at every dimension, and the coherence laws (associativity, unit laws, inverse laws) are witnessed by concrete rewrite derivations rather than abstract existence proofs.

The construction provides: (i) a proper tower of n-cells for all dimensions, with 2-cells as derivations between paths and higher cells mediating between lower-dimensional witnesses; (ii) explicit pentagon and triangle coherences built from the rewrite rules; and (iii) contractibility at dimensions at least 3, ensuring all parallel higher cells are connected. The contractibility property is derived from the normalization algorithm of the rewrite system, grounding the higher-dimensional structure in concrete computational content.

The entire construction has been formalized in Lean 4, providing machine-checked verification of the weak omega-groupoid structure.
\end{abstract}

\maketitle

\section{Introduction}
\label{sec:introduction}

The relationship between Martin-L\"of's intensional type theory and higher categorical structures has been a central topic in the foundations of mathematics since the discovery of homotopy type theory. The seminal work of Hofmann and Streicher \cite{HofmannStreicher1994} demonstrated that types in intensional type theory cannot satisfy uniqueness of identity proofs (UIP), revealing that the identity type has a rich structure beyond mere reflexivity.\footnote{The significance of this insight cannot be overstated. In a 2015 email exchange, Vladimir Voevodsky---who later developed the univalence axiom---reflected on his initial reaction to identity types: ``I was not inspired by [Hofmann-Streicher]. In fact I tried several times to understand what they are saying and never could. [...] All the time before it I was hypnotized by the mantra that the only inhabitant of the Id type is reflexivity which made [them] useless from my point of view.'' This ``mantra'' had obscured the rich higher structure of identity types for years, even from leading mathematicians. See \url{https://groups.google.com/g/homotopytypetheory/c/K_4bAZEDRvE/m/VbYjok7bAAAJ}.} This observation was dramatically extended by Lumsdaine \cite{Lumsdaine2009} and van den Berg and Garner \cite{BergGarner2011}, who independently proved that types in Martin-L\"of type theory form weak $\omega$-groupoids.

These results are profound but abstract: they show the existence of higher groupoid structure without providing explicit computational content for the witnesses of coherence laws. In contrast, the theory of \emph{computational paths} \cite{Queiroz2016Paths, Ramos2017IdentityPaths, Ramos2018ExplicitPaths} provides an alternative formulation of the identity type where witnesses of propositional equality are explicit sequences of rewrites---computational paths that transform one term into another through applications of definitional equality rules.

In our previous work \cite{Veras2023WeakGroupoid}, we established that computational paths form a weak groupoid (a 1-groupoid where laws hold up to higher equivalence rather than strict equality). The present paper extends this result to its natural conclusion: \emph{computational paths form a weak $\omega$-groupoid}, the full higher categorical structure expected from the identity type interpretation.

\subsection{Main Contributions}

This paper provides:

\begin{enumerate}
    \item A complete, self-contained pen-and-paper proof that computational paths on any type $A$ form a weak $\omega$-groupoid, following the Lumsdaine/van den Berg-Garner construction.

    \item A proper tower of $n$-cells: $\Cell_2$ (2-cells, i.e., derivations between paths), $\Cell_3$ (3-cells between 2-cells), $\Cell_4$ (4-cells between 3-cells), and $\Cell_n$ for arbitrary dimensions $n$.

    \item The \emph{canonicity axiom}: every derivation is connected to a canonical derivation $\gamma_{p,q}$ that factors through normal forms. Contractibility at all dimensions $\geq 3$ is \emph{derived} from this axiom. The construction is semantically justified by normalization and confluence of the rewrite system \cite{Ruy4, Ramos2018ExplicitPaths}, grounding the higher-dimensional structure in concrete computational content.

    \item Explicit pentagon and triangle coherence witnesses constructed from the $\mathrm{LND}_{\mathrm{EQ}}$-TRS rewrite rules.

    \item Groupoid operations (identity, composition, inverse) at each level with proper source/target behavior.
\end{enumerate}

\subsection{Related Work}

The computational paths approach to the identity type originates in work by de Queiroz and colleagues \cite{Queiroz2016Paths}, building on earlier ideas about equality proofs as sequences of rewrites \cite{Ruy4}. The formalization connects to term rewriting systems \cite{BaaderNipkow1998, KnuthBendix1970}, particularly the $\mathrm{LND}_{\mathrm{EQ}}$-TRS system \cite{Ramos2018Thesis} that captures all rewrite rules between computational paths. The connection between rewriting and coherence has been developed by Kraus and von Raumer \cite{KrausVonRaumer2022}, who showed that confluence and termination suffice to build coherent higher-dimensional structure, and by Mimram \cite{Mimram2023Coherence} for categorical coherence.

On the categorical side, weak $\omega$-groupoids were characterized by Batanin, Leinster \cite{leinster1}, and others. The connection to type theory was established through the groupoid model \cite{HofmannStreicher1994} and extended to higher dimensions in \cite{Lumsdaine2009, BergGarner2011}. Awodey's work \cite{Awodey2012} provides essential background on type theory and homotopy. More recently, Finster and Mimram \cite{FinsterMimram2017} developed CaTT, a type theory whose models are exactly weak $\omega$-categories, with Benjamin et al.\ \cite{BenjaminFinsterMimram2021} proving the initiality conjecture for this system.

The computational content of homotopy type theory has been explored through cubical type theory \cite{Cohen2018Cubical}, which provides a constructive interpretation of the univalence axiom. Recent advances include the formalization of $\pi_4(\mathbb{S}^3)$ in Cubical Agda \cite{LjungstromMortberg2023} and two-level type theory \cite{AnnenkovCapriottiKrausSattler2023}, which allows mixing strict and weak equality. These developments demonstrate the increasing computational sophistication of higher-dimensional type theory.

The present work complements the Lean 4 formalization of computational paths available at \cite{ComputationalPathsLean}, which provides a verified implementation of the constructions described herein.

\subsection{Structure of the Paper}

Section~\ref{sec:background} reviews the necessary background on type theory, computational paths, and higher category theory. Section~\ref{sec:cells} defines the cell structure at each dimension. Section~\ref{sec:operations} introduces the groupoid operations (identity, composition, inverse) and proves they satisfy the required source/target conditions. Section~\ref{sec:globular} establishes the globular identities. Section~\ref{sec:coherence} proves the coherence laws, including associativity, unit laws, and inverse laws. Section~\ref{sec:higher} addresses the pentagon and triangle coherences, as well as the contractibility of higher cells. Section~\ref{sec:main} assembles these results into the main theorem. Section~\ref{sec:conclusion} discusses implications and future work.

\section{Background}
\label{sec:background}

\subsection{Martin-L\"of Type Theory and the Identity Type}

In Martin-L\"of's intensional type theory \cite{MartinLoef1984}, the \emph{identity type} $\Id_A(a,b)$ is formed for any type $A$ and terms $a, b : A$. The traditional formulation provides:

\begin{itemize}
    \item \textbf{Formation}: Given $A$ type and $a, b : A$, we form $\Id_A(a,b)$ type.
    \item \textbf{Introduction}: For any $a : A$, we have $\refl(a) : \Id_A(a,a)$.
    \item \textbf{Elimination (J-rule)}: The recursor $J$ allowing proofs by path induction.
\end{itemize}

The $J$-eliminator states that to prove a property $C(x, y, p)$ for all $x, y : A$ and $p : \Id_A(x,y)$, it suffices to prove $C(x, x, \refl(x))$ for all $x : A$. While elegant, this formulation makes the identity type opaque: the only canonical proof is reflexivity, and other proofs must be constructed via the complex $J$-eliminator without revealing the computational steps that justify the equality.

A natural question arises from the Brouwer-Heyting-Kolmogorov interpretation of logic: \emph{what constitutes a proof of an equality statement $t_1 = t_2$?} The BHK interpretation provides answers for conjunction (a pair of proofs), disjunction (a proof of either disjunct), implication (a function transforming proofs), and quantifiers, but leaves equality unspecified. The computational paths approach fills this gap.

\subsection{The Definitional Equality of Type Theory}

Before defining computational paths, we must understand the equality theory they are built upon. In type theory, \emph{definitional equality} (also called judgmental or computational equality) is the relation that determines when two terms are considered identical by the type checker. For the $\Pi$-type (dependent function type), the standard rules include:

\begin{itemize}
    \item \textbf{$\beta$-reduction}: $(\lambda x.M)N = M[N/x]$ --- function application
    \item \textbf{$\eta$-expansion}: $(\lambda x.Mx) = M$ when $x \notin FV(M)$ --- extensionality
    \item \textbf{$\xi$-rule}: From $M = M'$, derive $\lambda x.M = \lambda x.M'$ --- congruence under $\lambda$
    \item \textbf{$\mu$-rule}: From $M = M'$, derive $NM = NM'$ --- congruence in argument
    \item \textbf{$\nu$-rule}: From $M = M'$, derive $MN = M'N$ --- congruence in function
    \item \textbf{$\rho$-rule}: $M = M$ --- reflexivity
    \item \textbf{$\sigma$-rule}: From $M = N$, derive $N = M$ --- symmetry
    \item \textbf{$\tau$-rule}: From $M = N$ and $N = P$, derive $M = P$ --- transitivity
\end{itemize}

Similar rules exist for other type formers ($\Sigma$-types, sum types, etc.). The key observation is that two terms $M$ and $N$ are definitionally equal if and only if there exists a \emph{sequence of rule applications} transforming $M$ into $N$. This sequence is precisely what we call a computational path.

\subsection{Computational Paths: Formal Definition}

\begin{defi}[Computational Path {\cite{Queiroz2016Paths, Ramos2017IdentityPaths}}]
Let $a$ and $b$ be terms of type $A$. A \emph{computational path} $s$ from $a$ to $b$, written $a =_s b : A$, is a composition of rewrites where each rewrite is an application of one of the inference rules of the definitional equality theory. The path $s$ is a term that \emph{records} the sequence of rules applied.
\end{defi}

\begin{exa}
Consider the term $M \equiv (\lambda x.(\lambda y.yx)(\lambda w.zw))v$ in the untyped $\lambda$-calculus. We can show $M =_{\beta\eta} zv$ via the sequence:
\begin{enumerate}
    \item $(\lambda x.(\lambda y.yx)(\lambda w.zw))v$ --- original term
    \item $(\lambda x.(\lambda y.yx)z)v$ --- by $\eta$-reduction on $(\lambda w.zw)$
    \item $(\lambda y.yv)z$ --- by $\beta$-reduction
    \item $zv$ --- by $\beta$-reduction
\end{enumerate}
The computational path recording this derivation is:
\[
s = \trans(\eta, \trans(\beta, \beta))
\]
where each step names the rule applied and $\trans$ composes consecutive steps.
\end{exa}

The fundamental operations on computational paths are:

\begin{itemize}
    \item \textbf{Reflexivity} $\refl$: For any $a : A$, we have $a =_\refl a : A$. This is the trivial path of zero steps.

    \item \textbf{Symmetry} $\symm$: From $a =_s b : A$, derive $b =_{\symm(s)} a : A$. This reverses the direction of each step in $s$.

    \item \textbf{Transitivity} $\trans$: From $a =_s b : A$ and $b =_t c : A$, derive $a =_{\trans(s,t)} c : A$. This concatenates the two sequences of rewrites.
\end{itemize}

Additionally, computational paths support \emph{congruence} operations that lift paths through term constructors:

\begin{itemize}
    \item \textbf{$\mu_f$}: From $a =_s b : A$ and $f : A \to B$, derive $f(a) =_{\mu_f(s)} f(b) : B$.
    \item \textbf{$\xi$}: From $M =_s N : B$ (with $x : A$ free), derive $\lambda x.M =_{\xi(s)} \lambda x.N : A \to B$.
\end{itemize}

\subsection{The Path-Based Identity Type}

Using computational paths, we reformulate the identity type with explicit witnesses:

\[
\dfrac{A \text{ type} \quad a : A \quad b : A}{\Id_A(a,b) \text{ type}} \; (\Id\text{-F})
\qquad
\dfrac{a =_s b : A}{s(a,b) : \Id_A(a,b)} \; (\Id\text{-I})
\]

The introduction rule states that any computational path $s$ from $a$ to $b$ gives rise to a term $s(a,b)$ inhabiting the identity type. This makes the \emph{reason} for equality explicit: the path $s$ records exactly which rules were applied to transform $a$ into $b$.

The elimination rule uses a constructor $\mathrm{REWR}$ (for ``rewrite''):
\[
\dfrac{m : \Id_A(a,b) \quad [a =_g b : A] \vdash h(g) : C}{\mathrm{REWR}(m, \acute{g}.h(g)) : C} \; (\Id\text{-E})
\]

This says: if from an arbitrary path $g$ witnessing $a = b$ we can construct $h(g) : C$, and we have an actual witness $m : \Id_A(a,b)$, then we can construct a term of type $C$ by substituting the path underlying $m$ for $g$ in $h$.

\subsection{The $\mathrm{LND}_{\mathrm{EQ}}$-TRS: A Term Rewriting System for Paths}

A crucial observation is that \emph{different paths between the same endpoints may be equivalent}. For instance, applying symmetry twice returns to the original path: $\symm(\symm(s))$ should be equivalent to $s$. Similarly, composing a path with the reflexive path should yield the original: $\trans(s, \refl) \sim s$.

These equivalences are captured by the $\mathrm{LND}_{\mathrm{EQ}}$-TRS (Labelled Natural Deduction Equality Term Rewriting System), originating in the work of de Queiroz \cite{Ruy4} and further developed in \cite{Ramos2018Thesis}. This system provides a complete set of rewrite rules for computational paths, organized into several families:

\subsubsection{Basic Reductions}

The first family handles interactions between the fundamental operations:

\begin{align}
\symm(\refl) &\Rw \refl \tag{$\rhd_{sr}$} \\
\symm(\symm(r)) &\Rw r \tag{$\rhd_{ss}$} \\
\trans(r, \symm(r)) &\Rw \refl \tag{$\rhd_{tr}$} \\
\trans(\symm(r), r) &\Rw \refl \tag{$\rhd_{tsr}$} \\
\trans(r, \refl) &\Rw r \tag{$\rhd_{trr}$} \\
\trans(\refl, r) &\Rw r \tag{$\rhd_{tlr}$}
\end{align}

Rule $\rhd_{sr}$ says the symmetry of reflexivity is reflexivity. Rule $\rhd_{ss}$ says double symmetry cancels. Rules $\rhd_{tr}$ and $\rhd_{tsr}$ say a path composed with its inverse yields reflexivity. Rules $\rhd_{trr}$ and $\rhd_{tlr}$ are the unit laws for transitivity.

\subsubsection{Associativity}

The associativity of transitivity is captured by:
\begin{equation}
\trans(\trans(r, s), t) \Rw \trans(r, \trans(s, t)) \tag{$\rhd_{tt}$}
\end{equation}

This rule is fundamental for the groupoid structure: it says that the order of composing three paths doesn't matter (up to rewrite equivalence).

\subsubsection{Distributivity of Symmetry}

Symmetry distributes over transitivity with order reversal:
\begin{equation}
\symm(\trans(r, s)) \Rw \trans(\symm(s), \symm(r)) \tag{$\rhd_{stss}$}
\end{equation}

This reflects the fact that reversing a composite path requires reversing each component and their order.

\subsubsection{Congruence Rules}

The system includes rules for how congruence operations interact with the basic operations. For instance:
\begin{align}
\mu_f(\trans(p, q)) &\Rw \trans(\mu_f(p), \mu_f(q)) \tag{$\rhd_{tf}$} \\
\mu_f(\symm(p)) &\Rw \symm(\mu_f(p)) \tag{$\rhd_{sm}$} \\
\mu_g(\mu_f(p)) &\Rw \mu_{g \circ f}(p) \tag{$\rhd_{cf}$} \\
\mu_{\mathrm{id}}(p) &\Rw p \tag{$\rhd_{ci}$}
\end{align}

These say that applying a function commutes with path operations, and that function composition corresponds to composition of congruence operations.

\subsubsection{$\beta\eta$-Reductions for Paths}

The system also includes rules that capture interactions between path operations and the type-theoretic $\beta\eta$ rules. For example:
\begin{equation}
\nu(\xi(r)) \Rw r \tag{$\rhd_{mxl}$}
\end{equation}

This says that lifting a path under $\lambda$ and then applying it ($\nu$) recovers the original path.

The full $\mathrm{LND}_{\mathrm{EQ}}$-TRS contains over 40 rules covering all interactions between path operations, congruences for all type formers, and substitution operations. We refer to \cite{Ramos2018Thesis} for the complete system.

\subsection{Rewrite Equivalence}

\begin{defi}[Rewrite Step Relation]
A \emph{rewrite step} $p \Rw q$ holds when $q$ is obtained from $p$ by applying one of the $\mathrm{LND}_{\mathrm{EQ}}$-TRS rules at some position in $p$.
\end{defi}

\begin{defi}[Step Type]
\label{def:step}
For paths $p, q : \Path(a, b)$, the type $\mathrm{Step}(p, q)$ is defined as the type of witnesses that $p \Rw q$ in a single rewrite step. Formally:
\[
\mathrm{Step}(p, q) := \{ s \mid s \text{ witnesses } p \Rw q \text{ by a single rule application} \}
\]
Crucially, $\mathrm{Step}$ is \emph{propositional}: for any $p, q$, the type $\mathrm{Step}(p, q)$ has at most one inhabitant. This reflects that we care only whether a rewrite step exists, not which rule was applied.
\end{defi}

\begin{defi}[Rewrite Closure]
The \emph{rewrite closure} $\mathrm{Rw}$ is the reflexive-transitive closure of the rewrite step relation.
\end{defi}

\begin{defi}[Rewrite Equivalence]
The \emph{rewrite equivalence} relation $\RwEq$ on $\Path(a,b)$ is the equivalence closure of $\mathrm{Rw}$:
\[
p \RwEq q \iff \exists r_1, \ldots, r_n : p = r_1, \; r_n = q, \; \forall i : r_i \Rw r_{i+1} \text{ or } r_{i+1} \Rw r_i
\]
\end{defi}

\begin{thm}[Properties of $\RwEq$ {\cite[Chapter 4]{Ramos2018Thesis}}]
The rewrite equivalence relation satisfies:
\begin{enumerate}
    \item \textbf{Reflexivity}: $p \RwEq p$ for all paths $p$.
    \item \textbf{Symmetry}: If $p \RwEq q$ then $q \RwEq p$.
    \item \textbf{Transitivity}: If $p \RwEq q$ and $q \RwEq r$ then $p \RwEq r$.
    \item \textbf{Congruence}: If $p \RwEq p'$ and $q \RwEq q'$ then $\trans(p,q) \RwEq \trans(p',q')$, and similarly for $\symm$ and congruence operations.
\end{enumerate}
\end{thm}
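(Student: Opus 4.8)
The plan is to split the four properties into two groups. Properties (1)--(3) are immediate from the zig-zag definition of $\RwEq$ as the reflexive--symmetric--transitive closure of the single-step relation, while property (4), congruence, is the substantive part and rests on a single-step congruence lemma lifted along zig-zags.

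For reflexivity I would exhibit the trivial witnessing sequence of length one, $r_1 = p$, for which the condition on consecutive pairs is vacuous. For symmetry, given a witness $p = r_1, \ldots, r_n = q$, reversing it to $r_n, \ldots, r_1$ again satisfies the defining disjunction ``$r_i \Rw r_{i+1}$ or $r_{i+1} \Rw r_i$'' at each step, since that disjunction is symmetric in its two arguments. For transitivity, I would concatenate the witnessing sequences for $p \RwEq q$ and $q \RwEq r$ at their shared endpoint $q$, obtaining a single zig-zag from $p$ to $r$.

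The heart of the argument is congruence, and the key lemma I would establish first is \emph{single-step congruence}: if $p \Rw p''$, then $\trans(p,q) \Rw \trans(p'',q)$ and $\trans(q,p) \Rw \trans(q,p'')$, and likewise $\symm(p) \Rw \symm(p'')$ and $\mu_f(p) \Rw \mu_f(p'')$. This follows directly from the definition of a rewrite step as the application of an $\mathrm{LND}_{\mathrm{EQ}}$-TRS rule \emph{at some position}: each path operation is a term constructor furnishing such a position, so a rule that fires inside $p$ also fires inside the corresponding subterm of $\trans(p,q)$, leaving the surrounding context untouched. This is precisely the standard closure of the one-step rewrite relation of a TRS under contexts.

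With the lemma in hand, I would lift congruence to $\RwEq$ by induction on the length of a zig-zag. Given a witness $p = p_0, \ldots, p_m = p'$ for $p \RwEq p'$, applying single-step congruence to each consecutive pair yields a zig-zag from $\trans(p_0,q)$ to $\trans(p_m,q)$, so $\trans(p,q) \RwEq \trans(p',q)$; varying the second argument similarly gives $\trans(p',q) \RwEq \trans(p',q')$, and property (3) then combines these into $\trans(p,q) \RwEq \trans(p',q')$. The cases for $\symm$ and for the congruence operations $\mu_f$ and $\xi$ follow the same pattern with a single coordinate. I expect the only real obstacle to be making the ``application at a position'' notion precise enough that term contexts are genuinely preserved; granting that, the remaining steps are routine bookkeeping and induction.
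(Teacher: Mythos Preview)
Your argument is correct. Properties (1)--(3) follow immediately from the zig-zag definition of $\RwEq$ exactly as you describe, and your treatment of (4) via a single-step congruence lemma (closure of $\Rw$ under term contexts) lifted along zig-zags is the standard and correct route.

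Note, however, that the paper does not actually supply its own proof of this theorem: it is stated with a citation to \cite[Chapter~4]{Ramos2018Thesis} and no \texttt{proof} environment follows. So there is nothing in the paper to compare your argument against beyond the bare statement. Your proof stands on its own, and the only point worth flagging is the one you already identify: the single-step congruence lemma relies on the rewrite relation being closed under contexts, which is a general property of term rewriting systems and is implicit in the paper's phrase ``applying one of the $\mathrm{LND}_{\mathrm{EQ}}$-TRS rules at some position in $p$.'' Granting that, the rest is indeed routine.
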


\begin{rem}[Proof Irrelevance and Step Equality]
\label{rem:proof-irrel}
Crucially, in our formalization $\RwEq$ is defined as a \emph{propositional} relation: for any paths $p, q : \Path(a,b)$, the type ``$p \RwEq q$'' is a proposition (has at most one inhabitant). This means that if $p \RwEq q$ holds, there is a unique witness of this fact.

This proof-irrelevance has two important consequences:
\begin{enumerate}
    \item It provides semantic justification for the canonicity axiom: since Step is proof-irrelevant, any two derivations between the same endpoints represent ``the same'' computational witness and should connect to the same canonical derivation.
    \item It ensures that the step equality axiom ($\mathsf{step\_eq}$) holds: any two steps with the same source and target are equal.
\end{enumerate}
Combined with the normalization and confluence of the $\mathrm{LND}_{\mathrm{EQ}}$-TRS \cite{Ruy4, Ramos2018ExplicitPaths}, these properties justify the canonicity axiom from which contractibility at dimension $\geq 3$ is derived.
\end{rem}

\subsection{Higher Categories and Weak $\omega$-Groupoids}

We now recall the categorical structures that computational paths give rise to.

\begin{defi}[Globular Set]
A \emph{globular set} $X$ is a sequence of sets $X(0), X(1), X(2), \ldots$ equipped with source and target functions $\src, \tgt : X(n+1) \to X(n)$ satisfying the \emph{globular identities}:
\begin{align}
\src(\src(c)) &= \src(\tgt(c)) \label{eq:glob1}\\
\tgt(\src(c)) &= \tgt(\tgt(c)) \label{eq:glob2}
\end{align}
for all $c \in X(n+2)$.
\end{defi}

The globular identities ensure that higher cells have consistent boundaries. Intuitively, a 2-cell goes between two parallel 1-cells (1-cells with the same source and target), and a 3-cell goes between two parallel 2-cells, and so on.

Elements of $X(n)$ are called \emph{$n$-cells}. In our setting:
\begin{itemize}
    \item 0-cells are points (elements of type $A$)
    \item 1-cells are paths between points
    \item 2-cells are ``paths between paths'' (witnessed by $\RwEq$)
    \item $n$-cells for $n \geq 3$ are higher coherence witnesses
\end{itemize}

\begin{defi}[Weak $\omega$-Category {\cite{leinster1, batanin1}}]
A \emph{weak $\omega$-category} consists of:
\begin{enumerate}
    \item A globular set $\Cell_n$ for $n \in \mathbb{N}$
    \item An identity operation $\id : \Cell_n \to \Cell_{n+1}$ for each $n$
    \item A composition operation $\comp : \Cell_{n+1} \times_{\Cell_n} \Cell_{n+1} \to \Cell_{n+1}$ for composable cells
    \item Coherence witnesses (associators, unitors) at each level
    \item Higher coherences (pentagon, triangle, etc.) relating the coherence witnesses
    \item Coherences at all higher levels, satisfying a contractibility condition
\end{enumerate}
The ``weak'' qualifier means that categorical laws (associativity, unit laws) hold only up to coherent isomorphism, not strict equality.
\end{defi}

\begin{defi}[Weak $\omega$-Groupoid]
A \emph{weak $\omega$-groupoid} is a weak $\omega$-category in which every cell is invertible: for each $(n+1)$-cell $f$, there exists an $(n+1)$-cell $\inv(f)$ with $\src(\inv(f)) = \tgt(f)$, $\tgt(\inv(f)) = \src(f)$, together with $(n+2)$-cells witnessing $\inv(f) \comp f \sim \id$ and $f \comp \inv(f) \sim \id$.
\end{defi}

The fundamental insight connecting type theory to higher category theory is:

\begin{thm}[Lumsdaine \cite{Lumsdaine2009}, van den Berg--Garner \cite{BergGarner2011}]
For any type $A$ in Martin-L\"of type theory, the identity types on $A$ form a weak $\omega$-groupoid.
\end{thm}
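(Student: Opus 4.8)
The plan is to follow the original Lumsdaine/van den Berg--Garner strategy, in which the entire higher structure is generated by the $J$-eliminator, so that each coherence law reduces to a statement about $\refl$. First I would construct the underlying globular set. Set $\Cell_0 := A$, and inductively let the $(n+1)$-cells between parallel $n$-cells $u$ and $v$ be the inhabitants of the iterated identity type $\Id(u,v)$, with $\src$ and $\tgt$ returning the two endpoints. The globular identities \eqref{eq:glob1}--\eqref{eq:glob2} then hold definitionally from the way iterated identity types are formed, so that an $(n+1)$-cell literally \emph{is} an identity proof between two $n$-cells.

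Second, I would define the groupoid operations at every level, each by path induction. For identities, take $\id(u) := \refl(u) : \Id(u,u)$. For composition, given $p : \Id(a,b)$ and $q : \Id(b,c)$, define $p \comp q : \Id(a,c)$ by $J$ on $q$, where the $\refl$ case returns $p$. For inverses, given $p : \Id(a,b)$ define $\inv(p) : \Id(b,a)$ by $J$ on $p$, where the $\refl$ case returns $\refl$. The source/target conditions demanded by the weak $\omega$-groupoid definition are then immediate from the definitions.

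Third, and this is the technical core, I would exhibit the coherence data. The leverage is that $J$ permits assuming the relevant input paths are $\refl$; under this reduction, associativity, the left and right unit laws, and the two inverse laws all become judgmental equalities, so the associator $\assoc$, the unitors $\lunit$ and $\runit$, and the inverse witnesses $\linv{p}$ and $\rinv{p}$ are produced as $\refl$ in the reduced case and transported back by $J$. The same mechanism yields the pentagon and triangle coherences one dimension higher, and continues to furnish a witness for every coherence at every dimension.

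The main obstacle is not any single coherence but the \emph{uniform organization} of the whole infinite tower. To satisfy the Batanin--Leinster definition one must show that these $J$-generated operations assemble into an algebra for a \emph{contractible} globular operad, i.e.\ that any two parallel higher operations built from the identity-type structure are themselves connected by a cell one dimension up. Path induction supplies exactly this -- any two parallel cells over $\refl$-reduced data are equal, hence connected -- but making the bookkeeping precise across all dimensions simultaneously, rather than one law at a time, is the delicate point where both original proofs invest their real effort. Finally, since $\inv$ is defined at every level and the inverse laws are witnessed above, every cell is invertible, upgrading the weak $\omega$-category to a weak $\omega$-groupoid.
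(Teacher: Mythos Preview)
The paper does not actually prove this theorem: it is stated in the background section (Section~\ref{sec:background}) as a cited result of Lumsdaine and van den Berg--Garner, with no accompanying proof. The paper's own contribution is the \emph{analogous} statement for computational paths (Theorem~\ref{thm:main}), whose proof is structurally quite different from what you sketch here.

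That said, your outline is a faithful summary of the original Lumsdaine/van den Berg--Garner strategy, and you correctly identify the crux: producing any \emph{individual} coherence by $J$-induction is routine, while the genuine work lies in organizing the entire infinite family of coherences into an algebra for a contractible globular operad. It is worth noting, for contrast, that the paper's proof for computational paths does \emph{not} use $J$; instead it builds the cell tower from an explicit term-rewriting system and obtains contractibility at dimension $\geq 3$ from a \emph{canonicity axiom} (Definition~\ref{def:canonicity}) grounded in normalization and confluence of the rewrite system, rather than from path induction. So while your sketch is appropriate for the cited theorem, it would not transfer to the paper's own setting.
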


The goal of this paper is to prove an analogous result for computational paths: that the explicit rewrite structure provides a weak $\omega$-groupoid with computable coherence witnesses.

\section{Cell Structure}
\label{sec:cells}

We now define the cells at each dimension for the weak $\omega$-groupoid of computational paths.

\subsection{Dimension 0: Points}

\begin{defi}[0-cells]
For a type $A$, the 0-cells are simply the elements of $A$:
\[
\Cell_0(A) := A
\]
\end{defi}

\subsection{Dimension 1: Paths}

\begin{defi}[1-cells]
A 1-cell is a computational path bundled with its endpoints:
\[
\Cell_1(A) := \{(a, b, p) \mid a, b : A, \, p : \Path(a, b)\}
\]
with source and target:
\begin{align*}
\src : \Cell_1(A) &\to \Cell_0(A), \quad \src(a, b, p) = a \\
\tgt : \Cell_1(A) &\to \Cell_0(A), \quad \tgt(a, b, p) = b
\end{align*}
\end{defi}

\subsection{Dimension 2: Derivations Between Paths}

\begin{defi}[2-cells]
\label{def:2cell}
A 2-cell is a \emph{derivation} between two paths with the same endpoints. Derivations are inductively defined:
\[
\Cell_2(p, q) \quad \text{for } p, q : \Path(a,b)
\]
with constructors:
\begin{itemize}
    \item $\refl(p) : \Cell_2(p, p)$ --- identity derivation
    \item $\mathrm{step}(s) : \Cell_2(p, q)$ where $s : \mathrm{Step}(p, q)$ --- single rewrite step
    \item $\inv(d) : \Cell_2(q, p)$ where $d : \Cell_2(p, q)$ --- inverse
    \item $d_1 \comp d_2 : \Cell_2(p, r)$ where $d_1 : \Cell_2(p, q)$ and $d_2 : \Cell_2(q, r)$ --- vertical composition
\end{itemize}

Source and target maps return the boundary paths. Given the implicit endpoints $a, b$ with $p, q : \Path(a,b)$:
\begin{align*}
\src(d) &:= (a, b, p) : \Cell_1(A) \\
\tgt(d) &:= (a, b, q) : \Cell_1(A)
\end{align*}
Thus $\src$ and $\tgt$ embed the path back into the bundled 1-cell type. When working fibrewise (with fixed $a, b$), we simply have $\src(d) = p$ and $\tgt(d) = q$.
\end{defi}

\begin{rem}[Key design: Derivations as data]
Crucially, 2-cells are \emph{data}, not mere propositions. This means different derivations between the same paths are distinguishable. The groupoid laws (associativity, units, inverses) hold \emph{up to higher cells}, not as strict equalities.

This is essential for a genuine weak $\omega$-groupoid: the coherence witnesses are non-trivial 3-cells.
\end{rem}

\begin{lem}[Equivalence of $\RwEq$ and $\Cell_2$ Inhabitedness]
\label{lem:rweq-cell2}
For paths $p, q : \Path(a, b)$, the following are equivalent:
\begin{enumerate}
    \item $p \RwEq q$ (rewrite equivalence holds)
    \item $\Cell_2(p, q)$ is inhabited (there exists a derivation from $p$ to $q$)
\end{enumerate}
\end{lem}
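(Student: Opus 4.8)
The plan is to prove the two implications separately, since this is a biconditional about when a derivation exists versus when the rewrite-equivalence relation holds.

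\medskip

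\noindent\textbf{Forward direction ($p \RwEq q \Rightarrow \Cell_2(p,q)$ inhabited).} By the definition of rewrite equivalence, $p \RwEq q$ means there is a finite zig-zag sequence $p = r_1, r_2, \ldots, r_n = q$ in which each consecutive pair satisfies $r_i \Rw r_{i+1}$ or $r_{i+1} \Rw r_i$. I would proceed by induction on the length $n$ of this sequence. For the base case $n = 1$ we have $p = q$ and take $\refl(p) : \Cell_2(p,p)$. For the inductive step, suppose we have built a derivation $d : \Cell_2(r_1, r_{n-1})$ for the truncated sequence. The final link is either a forward step $r_{n-1} \Rw r_n$, giving $s : \mathrm{Step}(r_{n-1}, r_n)$ and hence $\mathrm{step}(s) : \Cell_2(r_{n-1}, r_n)$, or a backward step $r_n \Rw r_{n-1}$, giving $s : \mathrm{Step}(r_n, r_{n-1})$ and hence $\inv(\mathrm{step}(s)) : \Cell_2(r_{n-1}, r_n)$. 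In either case, vertical composition yields $d \comp (\cdots) : \Cell_2(r_1, r_n) = \Cell_2(p, q)$, completing the induction. Thus the four constructors of Definition~\ref{def:2cell} ($\refl$, $\mathrm{step}$, $\inv$, and $\comp$) are exactly the data needed to realize any zig-zag as a derivation.

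\medskip

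\noindent\textbf{Backward direction ($\Cell_2(p,q)$ inhabited $\Rightarrow p \RwEq q$).} Here I would induct on the structure of the derivation $d : \Cell_2(p,q)$, using the four properties of $\RwEq$ from the theorem on properties of $\RwEq$ (reflexivity, symmetry, transitivity, congruence). For $d = \refl(p)$, we have $p = q$ and reflexivity of $\RwEq$ gives $p \RwEq p$. For $d = \mathrm{step}(s)$ with $s : \mathrm{Step}(p,q)$, a single rewrite step $p \Rw q$ is in particular a one-element zig-zag, so $p \RwEq q$. For $d = \inv(d')$ with $d' : \Cell_2(p',q')$ witnessing $p' \RwEq q'$ (where the endpoints are swapped), symmetry of $\RwEq$ delivers the reversed equivalence. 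For vertical composition $d_1 \comp d_2$ with $d_1 : \Cell_2(p,q')$ and $d_2 : \Cell_2(q',q)$, the inductive hypotheses give $p \RwEq q'$ and $q' \RwEq q$, and transitivity of $\RwEq$ yields $p \RwEq q$.

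\medskip

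\noindent\textbf{Main obstacle.} The proof is essentially a bookkeeping correspondence between the generators-and-relations presentation of $\RwEq$ (as a zig-zag equivalence closure) and the inductive generators of $\Cell_2$, so no single step is deep. The point requiring the most care is the \emph{backward} direction's treatment of $\inv$: one must track that the endpoints of a 2-cell are swapped by $\inv$ and confirm that the symmetry clause of $\RwEq$ applies to the correctly-oriented equivalence, rather than silently assuming the relation is already symmetric at the level of single steps. A secondary subtlety is that the equivalence is only about \emph{inhabitedness}, not about a bijection of witnesses: many distinct derivations $d$ can collapse to the single proof-irrelevant witness of $p \RwEq q$ (cf.\ Remark~\ref{rem:proof-irrel}), so the two directions are genuinely maps on inhabitation and need not be mutually inverse. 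This is consistent with the design note that $\Cell_2$ is data while $\RwEq$ is propositional, and the lemma should be read accordingly.
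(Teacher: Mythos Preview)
Your proof is correct and matches the paper's approach: both directions are structural inductions pairing the four constructors of $\Cell_2$ with the closure clauses (reflexivity, step, symmetry, transitivity) of $\RwEq$. The only imprecision is that a zig-zag link $r_i \Rw r_{i+1}$ is already the reflexive--transitive closure and may consist of several steps, so it yields a \emph{composite} of $\mathrm{step}$ cells rather than a single $s : \mathrm{Step}(r_{n-1}, r_n)$; this is a trivial fix, and the paper's own proof sidesteps it by inducting directly on the inductive generation of $\RwEq$ rather than on zig-zag length.
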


\begin{proof}
$(1 \Rightarrow 2)$: By induction on the derivation of $p \RwEq q$:
\begin{itemize}
    \item If $p = q$ (reflexivity), then $\refl(p) : \Cell_2(p, p)$.
    \item If $p \Rw q$ via a step $s$, then $\mathrm{step}(s) : \Cell_2(p, q)$.
    \item If $q \RwEq p$ with witness $d : \Cell_2(q, p)$, then $\inv(d) : \Cell_2(p, q)$.
    \item If $p \RwEq r$ and $r \RwEq q$ with witnesses $d_1, d_2$, then $d_1 \comp d_2 : \Cell_2(p, q)$.
\end{itemize}

$(2 \Rightarrow 1)$: By induction on the derivation $d : \Cell_2(p, q)$:
\begin{itemize}
    \item $\refl(p)$ gives $p \RwEq p$ by reflexivity.
    \item $\mathrm{step}(s)$ gives $p \RwEq q$ since $s : \mathrm{Step}(p, q)$ implies $p \Rw q$.
    \item $\inv(d)$ gives $q \RwEq p$ by symmetry (IH), hence $p \RwEq q$ by symmetry again.
    \item $d_1 \comp d_2$ gives $p \RwEq q$ by transitivity (IH on both).
\end{itemize}
\end{proof}

\begin{rem}[Clarification on Normal Forms]
An important consequence of Lemma~\ref{lem:rweq-cell2} combined with Assumption~\ref{assum:parallel-nf} is: if $\Cell_2(p, q)$ is inhabited (i.e., there exists a derivation $d : \Cell_2(p, q)$), then $\|p\| = \|q\|$. This follows because $\Cell_2(p, q)$ inhabited implies $p \RwEq q$ by Lemma~\ref{lem:rweq-cell2}, and for parallel paths the normal forms coincide by Assumption~\ref{assum:parallel-nf}.

Note that the converse also holds: any two parallel paths $p, q : \Path(a, b)$ have $\|p\| = \|q\|$ (by Assumption~\ref{assum:parallel-nf}), and hence the canonical derivation $\gamma_{p,q} : \Cell_2(p, q)$ exists. Thus $\Cell_2(p, q)$ is \emph{always} inhabited for parallel paths. The distinction is that $\RwEq$ is a proposition (proof-irrelevant), while $\Cell_2$ is data (different derivations are distinguishable).
\end{rem}

\subsection{Dimension 3: Meta-Derivations}

\begin{defi}[3-cells]
A 3-cell connects two parallel 2-cells (derivations with the same source and target paths):
\[
\Cell_3(d_1, d_2) \quad \text{for } d_1, d_2 : \Cell_2(p, q)
\]
with constructors:
\begin{itemize}
    \item $\refl(d) : \Cell_3(d, d)$
    \item $\mathrm{step}(s) : \Cell_3(d_1, d_2)$ where $s : \mathrm{MetaStep}_3(d_1, d_2)$
    \item $\inv(\alpha) : \Cell_3(d_2, d_1)$ where $\alpha : \Cell_3(d_1, d_2)$
    \item $\alpha \comp \beta : \Cell_3(d_1, d_3)$ where $\alpha : \Cell_3(d_1, d_2)$ and $\beta : \Cell_3(d_2, d_3)$
\end{itemize}
\end{defi}

\begin{defi}[Primitive Meta-Steps: $\mathrm{MetaStep}_3$]
\label{def:metastep3}
For 2-cells $d_1, d_2 : \Cell_2(p, q)$, the type $\mathrm{MetaStep}_3(d_1, d_2)$ is an inductive type with the following constructors:

\textbf{Groupoid laws} (for derivations $d, d', d'' : \Cell_2$):
\begin{align*}
\mathsf{vcomp\_refl\_right}(d) &: \mathrm{MetaStep}_3(d \comp \refl, d) \\
\mathsf{vcomp\_refl\_left}(d) &: \mathrm{MetaStep}_3(\refl \comp d, d) \\
\mathsf{vcomp\_assoc}(d, d', d'') &: \mathrm{MetaStep}_3((d \comp d') \comp d'', d \comp (d' \comp d'')) \\
\mathsf{inv\_inv}(d) &: \mathrm{MetaStep}_3(\inv(\inv(d)), d) \\
\mathsf{vcomp\_inv\_right}(d) &: \mathrm{MetaStep}_3(d \comp \inv(d), \refl) \\
\mathsf{vcomp\_inv\_left}(d) &: \mathrm{MetaStep}_3(\inv(d) \comp d, \refl)
\end{align*}

\textbf{Step coherence} (since $\mathrm{Step}$ is propositional):
\[
\mathsf{step\_eq}(s_1, s_2) : \mathrm{MetaStep}_3(\mathrm{step}(s_1), \mathrm{step}(s_2)) \quad \text{for } s_1, s_2 : \mathrm{Step}(p, q)
\]

\textbf{Canonicity axiom} (every derivation connects to the canonical derivation; see Definition~\ref{def:canonicity}):
\[
\mathsf{can}_d : \mathrm{MetaStep}_3(d, \gamma_{p,q}) \quad \text{for } d : \Cell_2(p, q)
\]

\textbf{Higher coherences}:
\begin{align*}
\mathsf{pentagon}(f, g, h, k) &: \mathrm{MetaStep}_3(\text{pentagonLeft}, \text{pentagonRight}) \\
\mathsf{triangle}(f, g) &: \mathrm{MetaStep}_3(\text{triangleLeft}, \text{triangleRight}) \\
\mathsf{interchange}(\alpha, \beta) &: \mathrm{MetaStep}_3(\text{interchangeLeft}, \text{interchangeRight})
\end{align*}

\textbf{Whiskering} (see Definition~\ref{def:whiskering}):
\begin{align*}
\mathsf{whiskerLeft}(h, d) &: \mathrm{MetaStep}_3(\ldots) \\
\mathsf{whiskerRight}(d, g) &: \mathrm{MetaStep}_3(\ldots)
\end{align*}
\end{defi}

\subsection{Dimension 4 and Higher}

\begin{defi}[Primitive Meta-Steps: $\mathrm{MetaStep}_n$ for $n \geq 4$]
\label{def:metastepn}
For $n \geq 4$ and $(n-1)$-cells $c_1, c_2 : \Cell_{n-1}(d_1, d_2)$, the type $\mathrm{MetaStep}_n(c_1, c_2)$ has constructors:
\begin{itemize}
    \item Groupoid laws: unit laws, associativity, inverse laws
    \item Step coherence: $\mathsf{step\_eq}$ for equal steps
    \item Canonicity axiom: $\mathsf{can}_n(c) : \mathrm{MetaStep}_n(c, \gamma_n)$ where $\gamma_n$ is the canonical $(n-1)$-cell (defined analogously to $\gamma_{p,q}$ at level 2)
    \item Whiskering operations
\end{itemize}
The higher coherences (pentagon, triangle, interchange) are only needed at level 3. Contractibility at dimension $\geq 4$ is derived from the canonicity axiom exactly as at level 3 (Theorem~\ref{thm:contractibility}).
\end{defi}

\begin{defi}[4-cells]
\[
\Cell_4(m_1, m_2) \quad \text{for } m_1, m_2 : \Cell_3(d_1, d_2)
\]
with constructors: $\refl$, $\mathrm{step}$ (from $\mathrm{MetaStep}_4$), $\inv$, and $\comp$.
\end{defi}

\begin{defi}[$n$-cells for $n \geq 5$]
For $n \geq 5$, $n$-cells connect parallel $(n-1)$-cells:
\[
\Cell_n(c_1, c_2) \quad \text{for } c_1, c_2 : \Cell_{n-1}(\ldots)
\]
with constructors: $\refl$, $\mathrm{step}$ (from $\mathrm{MetaStep}_n$), $\inv$, and $\comp$. This provides a uniform tower to arbitrary dimensions.
\end{defi}

\subsection{The Canonicity Axiom}

We now introduce the key axiom that grounds contractibility in the normalization algorithm. First, we establish notation for the normal form construction.

\begin{defi}[Normal Form]
\label{def:normal-form}
Every path $p : \Path(a, b)$ has a \emph{normal form} $\|p\|$, obtained by the normalization algorithm of the $\mathrm{LND}_{\mathrm{EQ}}$-TRS \cite{Ruy4, Ramos2018ExplicitPaths}. Concretely, $\|p\|$ is the canonical representative of the equivalence class $[p]$ under the rewrite relation $\RwEq$.
\end{defi}

\begin{lem}[Normal Forms and Rewrite Equivalence]
\label{lem:nf-equiv}
If $p, q : \Path(a, b)$ are rewrite-equivalent (i.e., there exists $d : \Cell_2(p, q)$), then $\|p\| = \|q\|$.
\end{lem}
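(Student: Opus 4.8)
The plan is to reduce the statement, via Lemma~\ref{lem:rweq-cell2}, to a property of the rewrite equivalence $\RwEq$, and then to invoke the confluence and termination of the $\mathrm{LND}_{\mathrm{EQ}}$-TRS that already underlie Definition~\ref{def:normal-form}. First I would use the implication $(2 \Rightarrow 1)$ of Lemma~\ref{lem:rweq-cell2} to convert the hypothesis $d : \Cell_2(p, q)$ into the assertion $p \RwEq q$. From this point the particular derivation $d$ is irrelevant: only the equivalence $p \RwEq q$ matters, and indeed only the fact that $p$ and $q$ lie in the same $\RwEq$-class.

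Next I would unfold the definition of $\RwEq$ as a finite zigzag $p = r_1, r_2, \ldots, r_n = q$ in which each consecutive pair satisfies $r_i \Rw r_{i+1}$ or $r_{i+1} \Rw r_i$. Proceeding by induction on the length $n$ of this zigzag, the base case $n = 1$ gives $p = q$ and hence $\|p\| = \|q\|$ trivially, while the inductive step reduces the entire claim to a single assertion: \emph{a single rewrite step preserves the normal form}, i.e.\ $r \Rw r'$ implies $\|r\| = \|r'\|$, the backward case $r' \Rw r$ being symmetric.

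The single-step case is the heart of the argument, and it is exactly where the convergence of the TRS enters. Since the $\mathrm{LND}_{\mathrm{EQ}}$-TRS is terminating \cite{Ruy4, Ramos2018ExplicitPaths}, both $r$ and $r'$ possess normal forms. Given a step $r \Rw r'$, any reduction sequence from $r'$ to its normal form extends, by prefixing the step $r \Rw r'$, to a reduction sequence from $r$; hence $r$ and $r'$ admit a common reduct. By confluence (the Church--Rosser property) \cite{Ruy4, Ramos2018ExplicitPaths}, normal forms are unique within a reduction class, so this common reduct forces $\|r\| = \|r'\|$. Chaining this equality along the zigzag then yields $\|p\| = \|q\|$.

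I expect the main obstacle to be not combinatorial but a matter of bookkeeping about what one is entitled to assume. The well-definedness of $\|\cdot\|$ asserted in Definition~\ref{def:normal-form} already packages confluence and termination, so the present lemma is in essence a repackaging of those cited properties. The only point requiring care is to confirm that the normalization algorithm genuinely factors through $\RwEq$-classes --- that $\|\cdot\|$ is the canonical representative of the class $[p]$ rather than an artifact of a chosen reduction strategy --- as Definition~\ref{def:normal-form} claims. Once that is granted, the map $p \mapsto \|p\|$ is constant on each $\RwEq$-class by construction, and the conclusion is immediate.
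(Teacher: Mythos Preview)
Your proposal is correct and follows essentially the same approach as the paper: both arguments reduce to the fact that $\|\cdot\|$ is by definition the canonical representative of the $\RwEq$-class, backed by termination and confluence of the $\mathrm{LND}_{\mathrm{EQ}}$-TRS. The paper states this in two sentences (definition plus a one-line appeal to termination and confluence), whereas you spell out the zigzag induction and the single-step preservation explicitly before arriving at the same ``constant on $\RwEq$-classes by construction'' observation in your final paragraph.
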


\begin{proof}
By Definition~\ref{def:normal-form}, the normal form $\|p\|$ is the canonical representative of the equivalence class $[p]$ under the rewrite relation $\RwEq$. If $p \RwEq q$, then $p$ and $q$ belong to the same equivalence class, so they share the same canonical representative: $\|p\| = \|q\|$.

Concretely, since $\mathrm{LND}_{\mathrm{EQ}}$-TRS is terminating and confluent \cite{Ruy4, Ramos2018ExplicitPaths}, both $p$ and $q$ reduce to a common normal form.
\end{proof}

\begin{rem}[What This Lemma Does Not Claim]
\label{rem:not-all-parallel}
Note that Lemma~\ref{lem:nf-equiv} does \emph{not} assert that all parallel paths (paths with the same endpoints) share a normal form. That would be a much stronger claim, equivalent to saying that every hom-set $\Path(a,b)$ consists of a single $\RwEq$-equivalence class.

Instead, the lemma only applies to paths that are \emph{already known to be rewrite-equivalent}. This distinction is crucial: 2-cells witness rewrite equivalence, not mere parallelism. The weak $\omega$-groupoid structure is non-trivial precisely because different paths between the same endpoints may belong to different equivalence classes, with 2-cells only connecting paths within the same class.
\end{rem}

\begin{defi}[Normalizing Derivation]
\label{def:normalizing-deriv}
For each path $p$, the normalization algorithm produces a finite rewrite chain
\[
p = r_0 \Rw r_1 \Rw \cdots \Rw r_k = \|p\|
\]
We define the \emph{normalizing derivation} $\delta_p : \Cell_2(p, \|p\|)$ by composing the corresponding 2-cells:
\[
\delta_p := \mathrm{step}(s_1) \comp \mathrm{step}(s_2) \comp \cdots \comp \mathrm{step}(s_k)
\]
where each $s_i : \mathrm{Step}(r_{i-1}, r_i)$ is a single rewrite step in the chain. (If $p$ is already in normal form, $\delta_p = \refl_p$.)
\end{defi}

\begin{defi}[Canonical Derivation]
\label{def:canonical}
For paths $p, q : \Path(a, b)$ that are rewrite-equivalent (i.e., such that $\Cell_2(p, q)$ is inhabited), we define the \emph{canonical derivation} $\gamma_{p,q} : \Cell_2(p, q)$ as the following composition:
\[
\gamma_{p,q} \;:=\; \delta_p \comp \inv(\delta_q)
\]
where $\delta_p : \Cell_2(p, \|p\|)$ is the normalizing derivation from Definition~\ref{def:normalizing-deriv}.

By Lemma~\ref{lem:nf-equiv}, since $p \RwEq q$, we have $\|p\| = \|q\|$. Thus $\delta_p : \Cell_2(p, \|p\|)$ and $\inv(\delta_q) : \Cell_2(\|q\|, q) = \Cell_2(\|p\|, q)$, making the composition well-typed:
\[
\begin{tikzcd}
p \arrow[r, "\delta_p"] & \|p\| = \|q\| \arrow[r, "\inv(\delta_q)"] & q
\end{tikzcd}
\]

The canonical derivation $\gamma_{p,q}$ is only defined when $p$ and $q$ are already known to be rewrite-equivalent. It provides a \emph{distinguished} 2-cell among all 2-cells connecting $p$ to $q$---the one that factors through the shared normal form.
\end{defi}

\begin{defi}[Canonicity Axiom]
\label{def:canonicity}
The \emph{canonicity axiom} states that every derivation is connected to the canonical derivation by a 3-cell:
\[
\mathsf{can}_d : \Cell_3(d, \gamma_{p,q}) \quad \text{for each } d : \Cell_2(p, q)
\]
Equivalently, $\mathsf{can}$ is a constructor of $\mathrm{MetaStep}_3$:
\[
\mathsf{can} : \prod_{d : \Cell_2(p,q)} \mathrm{MetaStep}_3(d, \gamma_{p,q})
\]
At higher levels ($n \geq 4$), analogous canonicity axioms $\mathsf{can}_n$ are defined, with contractibility derived in the same way.

\textbf{Axiomatization:} In this formalization, we \emph{axiomatize} the existence of the 3-cell $\mathsf{can}_d$. This is the key primitive that grounds contractibility at dimension 3 and above. The axiom is not derived from more primitive principles within our formal system, but is justified semantically by the meta-theoretical properties described in Remark~\ref{rem:grounded-axiom}.
\end{defi}

\begin{rem}[Computational Content]
\label{rem:grounded-axiom}
The canonicity axiom is grounded in the concrete normalization algorithm:
\begin{enumerate}
    \item The normalizing derivation $\delta_p$ exists for every path (normalization terminates) \cite{Ruy4, Ramos2018ExplicitPaths}
    \item For rewrite-equivalent paths $p \RwEq q$, the canonical derivation $\gamma_{p,q}$ is uniquely determined by factoring through the shared normal form
    \item Confluence of the $\mathrm{LND}_{\mathrm{EQ}}$-TRS ensures that rewrite-equivalent paths share a normal form (Lemma~\ref{lem:nf-equiv})
\end{enumerate}
Unlike a bare contractibility axiom, the canonicity axiom connects each derivation to a \emph{specific, computable} target. Importantly, it only quantifies over \emph{existing} derivations $d : \Cell_2(p, q)$, not over all parallel paths.
\end{rem}

\begin{rem}[Comparison with HoTT]
\label{rem:hott-comparison}
The canonicity axiom plays a role analogous to J-elimination (path induction) in HoTT:
\begin{center}
\begin{tabular}{l|l}
\textbf{HoTT} & \textbf{Computational Paths} \\
\hline
J eliminator & Canonicity axiom \\
Path induction & Reduction to canonical derivation \\
$\refl_a$ is the canonical loop & $\|p\|$ is the canonical representative \\
\end{tabular}
\end{center}
Both encode the fundamental fact that identity proofs are unique up to higher identification---but the canonicity axiom provides explicit computational content via the normalization algorithm.
\end{rem}

\subsection{Contractibility from the Canonicity Axiom}

\begin{thm}[Contractibility at dimension $\geq 3$]
\label{thm:contractibility}
For any $n \geq 2$ and any two parallel $n$-cells $c_1, c_2$ (with $\src(c_1) = \src(c_2)$ and $\tgt(c_1) = \tgt(c_2)$), there exists an $(n+1)$-cell:
\[
\chi_n(c_1, c_2) : \Cell_{n+1}(c_1, c_2)
\]
In particular, any two parallel 2-cells (derivations) are connected by a 3-cell, and so on for higher dimensions.
\end{thm}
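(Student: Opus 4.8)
The plan is to reduce the statement, uniformly at every dimension, to one application of the canonicity axiom followed by a forward--backward composition. The canonicity axiom (Definition~\ref{def:canonicity}, with its higher analogues in Definition~\ref{def:metastepn}) supplies, for each cell, a distinguished $(n{+}1)$-cell connecting it to a \emph{canonical} $n$-cell that depends only on the common boundary. Since $c_1$ and $c_2$ are parallel they share that boundary, so both connect to the \emph{same} canonical centre; splicing the two resulting cells then produces the desired $\chi_n(c_1,c_2)$.

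First I would treat the base case $n = 2$, where $c_1, c_2$ are parallel 2-cells $d_1, d_2 : \Cell_2(p,q)$. Because $d_1$ is given, $\Cell_2(p,q)$ is inhabited, so $p \RwEq q$ by Lemma~\ref{lem:rweq-cell2}, and hence the canonical derivation $\gamma_{p,q}$ of Definition~\ref{def:canonical} is well-defined. The canonicity axiom then yields $\mathsf{can}_{d_1} : \Cell_3(d_1, \gamma_{p,q})$ and $\mathsf{can}_{d_2} : \Cell_3(d_2, \gamma_{p,q})$, and I set
\[
\chi_2(d_1, d_2) \;:=\; \mathsf{can}_{d_1} \comp \inv(\mathsf{can}_{d_2}) : \Cell_3(d_1, d_2),
\]
where $\inv$ reverses the second cell to $\inv(\mathsf{can}_{d_2}) : \Cell_3(\gamma_{p,q}, d_2)$ and vertical composition of 3-cells splices the two. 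A routine source/target check confirms the composite carries the advertised boundary $(d_1, d_2)$.

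The general case $n \geq 3$ runs verbatim one dimension higher. For parallel $n$-cells $c_1, c_2 : \Cell_n(\beta_1, \beta_2)$ the level-$n$ canonicity axiom provides $\mathsf{can}(c_i) : \Cell_{n+1}(c_i, \gamma)$, with $\gamma$ the canonical $n$-cell attached to the shared boundary $(\beta_1, \beta_2)$, and I define
\[
\chi_n(c_1, c_2) \;:=\; \mathsf{can}(c_1) \comp \inv(\mathsf{can}(c_2)) : \Cell_{n+1}(c_1, c_2).
\]
Since the pattern---emit a cell to the centre, invert, splice---is independent of $n$, no induction on the internal structure of the cells is needed: the tower of canonicity axioms handles every dimension in the same stroke.

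I do not anticipate a genuine technical obstacle in the derivation itself; the one point demanding care is the well-typedness of the composite, which relies precisely on the parallelism hypothesis forcing $c_1$ and $c_2$ to share a single canonical centre $\gamma$ (a non-parallel pair would produce two distinct centres and no composite). The real mathematical weight lies upstream in the canonicity axiom, which is \emph{posited} at each level rather than derived, its justification being the termination and confluence of the $\mathrm{LND}_{\mathrm{EQ}}$-TRS together with the proof-irrelevance of $\mathrm{Step}$ (Remarks~\ref{rem:proof-irrel} and~\ref{rem:grounded-axiom}). Once that axiom is granted, contractibility at dimension $\geq 3$ follows in a single line at every level.
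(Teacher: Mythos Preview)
Your proposal is correct and follows essentially the same approach as the paper: invoke the canonicity axiom to connect each parallel cell to the shared canonical centre $\gamma$, then splice via $\mathsf{can}_{c_1} \comp \inv(\mathsf{can}_{c_2})$, uniformly at every level. Your presentation is in fact slightly more careful than the paper's, since you explicitly justify the well-definedness of $\gamma_{p,q}$ via Lemma~\ref{lem:rweq-cell2} before invoking it.
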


\begin{proof}
At level 3, contractibility is \emph{derived} from the canonicity axiom. Given parallel derivations $d_1, d_2 : \Cell_2(p, q)$, we construct:
\[
\chi_3(d_1, d_2) \;:=\; \mathsf{can}_{d_1} \comp \inv(\mathsf{can}_{d_2}) \;:\; \Cell_3(d_1, d_2)
\]
Both $d_1$ and $d_2$ are connected to the same canonical derivation $\gamma_{p,q}$, so composing $\mathsf{can}_{d_1} : \Cell_3(d_1, \gamma_{p,q})$ with $\inv(\mathsf{can}_{d_2}) : \Cell_3(\gamma_{p,q}, d_2)$ yields a 3-cell from $d_1$ to $d_2$:
\[
\begin{tikzcd}
d_1 \arrow[r, "\mathsf{can}_{d_1}"] & \gamma_{p,q} \arrow[r, "\inv(\mathsf{can}_{d_2})"] & d_2
\end{tikzcd}
\]

At higher levels ($n \geq 4$), analogous canonicity axioms $\mathsf{can}_n$ are defined, with contractibility derived in the same way.
\end{proof}

\begin{rem}[Loop contraction as a special case]
\label{rem:loop-contract-derived}
Loop contraction follows from the canonicity axiom and the groupoid laws. For a loop $d : \Cell_2(p, p)$, we have $\gamma_{p,p} = \delta_p \comp \inv(\delta_p) \sim \refl_p$ by the inverse law. Thus:
\[
\chi_3(d, \refl_p) \;:\; \Cell_3(d, \refl_p)
\]
shows that every loop on $p$ is connected to the reflexivity derivation.
\end{rem}

\begin{rem}[Batanin-style Contractibility]
\label{rem:batanin-contractibility}
The $\omega$-groupoid is \emph{contractible at dimension $k$} if any two parallel $(k-1)$-cells are connected by a $k$-cell. This is the \emph{Batanin-style} contractibility condition for higher coherence structures---it means that higher hom-spaces are contractible.

\textbf{Important:} This is \emph{not} the same as global homotopy contractibility (being equivalent to a point). Rather, it says that at sufficiently high dimensions (here, $\geq 3$), all parallel cells are connected. The underlying 0-cells (points of $A$) and 1-cells (paths) may have non-trivial structure; only the higher coherence data becomes trivial.

The canonicity axiom at level 3 directly yields contractibility at all dimensions $\geq 3$.
\end{rem}

\section{Operations}
\label{sec:operations}

\subsection{Identity}

For each dimension, we define identity cells.

\begin{defi}[Identity at dimension 1]
For $a : \Cell_0(A)$:
\[
\id_0(a) := (a, a, \refl_a) : \Cell_1(A)
\]
\end{defi}

\begin{defi}[Identity at dimension 2]
For a 1-cell $c = (a, b, p) : \Cell_1(A)$, the identity 2-cell is the reflexivity derivation:
\[
\id_1(c) := \refl(p) : \Cell_2(p, p)
\]
This is the trivial derivation witnessing that $p$ is equivalent to itself.
\end{defi}

\begin{defi}[Identity at dimension $n \geq 3$]
For an $(n-1)$-cell $c : \Cell_{n-1}$, the identity $n$-cell is:
\[
\id_{n-1}(c) := \refl(c) : \Cell_n(c, c)
\]
At each level, this is the reflexivity constructor from the definition of $\Cell_n$.
\end{defi}

\begin{lem}[Identity source/target]
For all $n$ and $c : \Cell_n(A)$:
\begin{align*}
\src(\id_n(c)) &= c \\
\tgt(\id_n(c)) &= c
\end{align*}
\end{lem}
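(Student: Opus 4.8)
The plan is to proceed by case analysis on the dimension $n$, since the definition of $\id_n$ is given by a different clause at $n = 0$, at $n = 1$, and uniformly for $n \geq 2$, yet in every clause the identity cell is built solely from the reflexivity constructor $\refl$. In each case I would unfold the relevant definition of $\id_n$ together with the definitions of $\src$ and $\tgt$ at the target dimension $n+1$, and observe that both required equalities hold definitionally. No induction hypothesis is needed: each dimension is self-contained.

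First I would treat the base case $n = 0$. Here $c = a : \Cell_0(A)$ and $\id_0(a) = (a, a, \refl_a) : \Cell_1(A)$. Applying the source and target maps of $\Cell_1$ directly to this triple gives $\src(\id_0(a)) = a$ and $\tgt(\id_0(a)) = a$, which is exactly $c$ in both cases. Next I would handle $n = 1$: writing $c = (a, b, p)$, the definition gives $\id_1(c) = \refl(p) : \Cell_2(p, p)$. The reflexivity derivation $\refl(p)$ has, by its typing, both boundary paths equal to $p$, so the source and target maps of $\Cell_2$ return the same bundled 1-cell. The only point requiring attention is that $\src$ and $\tgt$ at dimension $2$ re-embed the boundary path into the triple $(a, b, \cdot)$ rather than returning the bare path; once this embedding is made explicit, both evaluate to $(a, b, p) = c$.

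For the uniform range $n \geq 2$ I would use the clause $\id_n(c) = \refl(c) : \Cell_{n+1}(c, c)$. The reflexivity constructor of $\Cell_{n+1}$ records $c$ in both the source and target positions of its type, and the source and target maps read off precisely those recorded boundary cells; hence $\src(\refl(c)) = c = \tgt(\refl(c))$ definitionally, by the same argument at every level.

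I do not expect a genuine obstacle, since the statement reduces to definitional unfolding in each case. The only bookkeeping that deserves care is the indexing convention $\id_n : \Cell_n \to \Cell_{n+1}$ (so that $\id_n(c)$ lands in dimension $n+1$ and its source and target live in dimension $n$, matching $c$), together with the re-bundling performed by $\src$ and $\tgt$ at dimension $2$ noted above. Neither introduces any real difficulty beyond keeping the endpoint triples consistent.
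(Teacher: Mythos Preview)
Your proposal is correct and takes essentially the same approach as the paper: the paper's proof is simply ``By definition of $\id_n$,'' and your case analysis on $n = 0$, $n = 1$, and $n \geq 2$ is just an explicit unfolding of that one-liner. Your attention to the re-bundling at dimension~2 is a nice touch, but there is no substantive difference in method.
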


\begin{proof}
By definition of $\id_n$.
\end{proof}

\subsection{Composition}

Composition combines two cells that are ``composable'' (the target of the first equals the source of the second).

\begin{defi}[Composition at dimension 1]
For $f = (a, b, p)$ and $g = (b', c, q)$ with $\tgt(f) = \src(g)$ (i.e., $b = b'$):
\[
f \comp g := (a, c, \trans(p, q)) : \Cell_1(A)
\]
\end{defi}

\begin{defi}[Composition at dimension 2]
For derivations $d_1 : \Cell_2(p, q)$ and $d_2 : \Cell_2(q, r)$ with $\tgt(d_1) = \src(d_2)$, their \emph{vertical composition} is:
\[
d_1 \comp d_2 : \Cell_2(p, r)
\]
This is the $\comp$ constructor from Definition~\ref{def:2cell}, which composes two derivations ``end to end'' along a shared path.
\end{defi}

\begin{defi}[Composition at dimension $n \geq 3$]
For $n$-cells $c_1 : \Cell_n(d_1, d_2)$ and $c_2 : \Cell_n(d_2, d_3)$ with $\tgt(c_1) = \src(c_2)$, their composition is:
\[
c_1 \comp c_2 : \Cell_n(d_1, d_3)
\]
This is the $\comp$ constructor from the definition of $\Cell_n$. Composition at all levels follows the same pattern: composing two cells ``end to end'' along a shared boundary.
\end{defi}

\begin{lem}[Composition source/target]
For all $n$ and composable $f, g : \Cell_{n+1}(A)$:
\begin{align*}
\src(f \comp g) &= \src(f) \\
\tgt(f \comp g) &= \tgt(g)
\end{align*}
\end{lem}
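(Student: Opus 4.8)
The plan is to prove both equations simultaneously by a case analysis on the dimension, since composition is defined by three separate clauses (dimension $1$, dimension $2$, and dimension $n \geq 3$). In each clause the composite is constructed precisely so that its outer boundary records the source of the first factor and the target of the second, so the two identities will fall out by unfolding the relevant definition. No induction is needed.

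For the base case $n+1 = 1$, I would write $f = (a, b, p)$ and $g = (b', c, q)$. Composability means $\tgt(f) = \src(g)$, i.e.\ $b = b'$, so by the definition of composition at dimension $1$ the composite is $f \comp g = (a, c, \trans(p, q))$. Applying the source and target maps on $\Cell_1(A)$---which project out the first and second components of the bundled triple---gives $\src(f \comp g) = a = \src(f)$ and $\tgt(f \comp g) = c = \tgt(g)$, as required.

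For $n+1 = 2$, given $d_1 : \Cell_2(p, q)$ and $d_2 : \Cell_2(q, r)$ (composability forcing the shared middle path $q$), vertical composition yields $d_1 \comp d_2 : \Cell_2(p, r)$. Since the source and target maps on $\Cell_2$ return the boundary paths---the bundled $1$-cells $(a,b,p)$ and $(a,b,r)$, or fibrewise just $p$ and $r$---we read off $\src(d_1 \comp d_2) = p = \src(d_1)$ and $\tgt(d_1 \comp d_2) = r = \tgt(d_2)$. The case $n+1 \geq 3$ is identical in form: the $\comp$ constructor of $\Cell_{n+1}$ sends inputs in $\Cell_{n+1}(c_1, c_2)$ and $\Cell_{n+1}(c_2, c_3)$ to an element of $\Cell_{n+1}(c_1, c_3)$, whose source is $c_1 = \src(c_1)$ and whose target is $c_3 = \tgt(c_2)$.

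There is no substantive obstacle here: the statement is immediate once the correct definitional clause is selected. The only point requiring minor care is the dimension-$1$ and dimension-$2$ bookkeeping, where $\src$ and $\tgt$ return bundled $1$-cells rather than bare paths. I would note explicitly that composability guarantees the middle $0$-cell (resp.\ middle path) agrees, so the outer boundaries of the composite coincide exactly with the source of $f$ and the target of $g$, and that the bundled and fibrewise viewpoints agree componentwise.
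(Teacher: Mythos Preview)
Your proposal is correct and takes essentially the same approach as the paper, which proves the lemma in one line (``By definition of composition at each level''); you have simply unfolded that definitional argument case by case. One minor notational slip: in the $n+1 \geq 3$ case you write ``whose source is $c_1 = \src(c_1)$ and whose target is $c_3 = \tgt(c_2)$,'' but you mean $\src(f) = c_1$ and $\tgt(g) = c_3$ where $f, g$ denote the two input cells.
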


\begin{proof}
By definition of composition at each level.
\end{proof}

\subsection{Inverse}

\begin{defi}[Inverse at dimension 1]
For $f = (a, b, p) : \Cell_1(A)$:
\[
\inv(f) := (b, a, \symm(p)) : \Cell_1(A)
\]
\end{defi}

\begin{defi}[Inverse at dimension 2]
For a 2-cell $d : \Cell_2(p, q)$, the inverse is:
\[
\inv(d) : \Cell_2(q, p)
\]
constructed using the $\inv$ constructor from Definition~\ref{def:2cell}.
\end{defi}

\begin{defi}[Inverse at dimension $n \geq 3$]
For an $n$-cell $c : \Cell_n(c_1, c_2)$, the inverse is:
\[
\inv(c) : \Cell_n(c_2, c_1)
\]
At each level, this uses the inverse constructor from the definition of $\Cell_n$.
\end{defi}

\begin{lem}[Inverse source/target]
For all $n$ and $f : \Cell_{n+1}(A)$:
\begin{align*}
\src(\inv(f)) &= \tgt(f) \\
\tgt(\inv(f)) &= \src(f)
\end{align*}
\end{lem}

\begin{proof}
By definition of inverse at each level.
\end{proof}

\section{Globular Identities}
\label{sec:globular}

The globular identities ensure that the source and target maps are compatible across dimensions.

\begin{thm}[Globular Identities]
\label{thm:globular}
For all $n \geq 0$ and $c : \Cell_{n+2}(A)$:
\begin{align}
\src(\src(c)) &= \src(\tgt(c)) \label{eq:glob-ss-st} \\
\tgt(\src(c)) &= \tgt(\tgt(c)) \label{eq:glob-ts-tt}
\end{align}
\end{thm}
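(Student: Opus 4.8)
The plan is to observe that the globular identities are not a computation over the structure of $c$ at all, but are forced by the \emph{indexing discipline} of the cell tower: every $(n+2)$-cell inhabits a type $\Cell_{n+2}(c_1, c_2)$ whose indices $c_1, c_2$ are \emph{parallel} $(n+1)$-cells, and parallelism is precisely the statement that $c_1$ and $c_2$ share a common source and a common target. Since the source and target maps return exactly these indexing cells, both identities reduce to the shared-boundary condition already built into the definitions of $\Cell_n$. I would make the argument explicit by a single case split on whether $n = 0$ or $n \geq 1$, because the source/target maps at the bottom of the tower (Definition~\ref{def:2cell}) bundle a path into $\Cell_1(A)$, whereas at dimensions $\geq 3$ they return the indexing cell directly.

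For the base case $n = 0$, take $c : \Cell_2(p, q)$ with $p, q : \Path(a, b)$. By Definition~\ref{def:2cell}, $\src(c) = (a, b, p)$ and $\tgt(c) = (a, b, q)$, both $1$-cells whose first and second components are $a$ and $b$. Applying the $\Cell_1(A)$ source and target maps gives $\src(\src(c)) = a = \src(\tgt(c))$ and $\tgt(\src(c)) = b = \tgt(\tgt(c))$, the two identities holding because $p$ and $q$ are parallel paths sharing the endpoints $a, b$. For $n \geq 1$, a cell $c : \Cell_{n+2}(c_1, c_2)$ is indexed by parallel $(n+1)$-cells $c_1, c_2 : \Cell_{n+1}(e_1, e_2)$ for common $n$-cells $e_1, e_2$, and the maps satisfy $\src(c) = c_1$, $\tgt(c) = c_2$. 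Because $c_1$ and $c_2$ inhabit the \emph{same} type $\Cell_{n+1}(e_1, e_2)$, we have $\src(c_1) = e_1 = \src(c_2)$ and $\tgt(c_1) = e_2 = \tgt(c_2)$. Substituting then yields $\src(\src(c)) = \src(c_1) = \src(c_2) = \src(\tgt(c))$ and, symmetrically, $\tgt(\src(c)) = \tgt(\tgt(c))$.

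I expect no genuine obstacle: the theorem holds by construction once the definitions are unwound, and no induction on the constructors of $c$ (whether $\refl$, $\mathrm{step}$, $\inv$, or $\comp$) is required, since the identities depend only on the indexing types. The one point demanding care is the notational shift in the source/target convention between dimension $2$ and dimensions $\geq 3$, which is precisely why I isolate the base case; beyond that, the uniform recursive shape of $\Cell_n$ makes the remaining argument a single substitution. I would close by remarking that this establishes the globular-set axiom \eqref{eq:glob1}--\eqref{eq:glob2} for the entire tower $(\Cell_n)_{n \in \mathbb{N}}$, so that the cell structure of Section~\ref{sec:cells} is a bona fide globular set, as required before introducing the groupoid operations.
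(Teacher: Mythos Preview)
Your proposal is correct and follows essentially the same approach as the paper: both arguments unwind the source/target maps and observe that the globular identities hold because parallel cells share their boundary by the very indexing of $\Cell_{n+2}(c_1,c_2)$. The only cosmetic difference is that the paper splits into three cases ($n=0$, $n=1$, $n\geq 2$) while you collapse the latter two into a single $n\geq 1$ case; your consolidation is valid since the $n=1$ and $n\geq 2$ cases are structurally identical once the fibrewise source/target convention is in force.
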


\begin{proof}
We verify by cases on $n$.

\textbf{Case $n = 0$:} Here $c : \Cell_2(p, q)$ is a derivation between paths $p, q : \Path(a, b)$.
\begin{itemize}
    \item $\src(c) = p$ and $\tgt(c) = q$, where both $p$ and $q$ have source $a$ and target $b$.
    \item $\src(\src(c)) = \src(p) = a$ and $\src(\tgt(c)) = \src(q) = a$, so \eqref{eq:glob-ss-st} holds.
    \item $\tgt(\src(c)) = \tgt(p) = b$ and $\tgt(\tgt(c)) = \tgt(q) = b$, so \eqref{eq:glob-ts-tt} holds.
\end{itemize}

\textbf{Case $n = 1$:} Here $c : \Cell_3(d_1, d_2)$ where $d_1, d_2 : \Cell_2(p, q)$ are parallel derivations.
\begin{itemize}
    \item $\src(c) = d_1$ and $\tgt(c) = d_2$, both with source path $p$ and target path $q$.
    \item $\src(\src(c)) = \src(d_1) = p$ and $\src(\tgt(c)) = \src(d_2) = p$, so \eqref{eq:glob-ss-st} holds.
    \item $\tgt(\src(c)) = \tgt(d_1) = q$ and $\tgt(\tgt(c)) = \tgt(d_2) = q$, so \eqref{eq:glob-ts-tt} holds.
\end{itemize}

\textbf{Case $n \geq 2$:} Here $c : \Cell_{n+2}(c_1, c_2)$ where $c_1, c_2 : \Cell_{n+1}(d_1, d_2)$ are parallel $(n+1)$-cells.
\begin{itemize}
    \item $\src(c) = c_1$ and $\tgt(c) = c_2$, both with source $d_1$ and target $d_2$.
    \item $\src(\src(c)) = \src(c_1) = d_1$ and $\src(\tgt(c)) = \src(c_2) = d_1$, so \eqref{eq:glob-ss-st} holds.
    \item $\tgt(\src(c)) = \tgt(c_1) = d_2$ and $\tgt(\tgt(c)) = \tgt(c_2) = d_2$, so \eqref{eq:glob-ts-tt} holds.
\end{itemize}
In each case, the globular identities follow from the fact that parallel cells share their source and target.
\end{proof}

\section{Coherence Laws}
\label{sec:coherence}

We now prove that the groupoid operations satisfy the expected laws up to higher cells.

\subsection{Associativity}

\begin{thm}[Associativity Coherence]
\label{thm:assoc}
For composable 1-cells $f, g, h$, there exists a 2-cell:
\[
\assoc_{f,g,h} : (f \comp g) \comp h \longrightarrow f \comp (g \comp h)
\]
At higher dimensions, associativity is witnessed by cells whose boundaries are the two associated compositions.
\end{thm}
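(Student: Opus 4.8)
The plan is to reduce the claim to a single application of the associativity rewrite rule $\rhd_{tt}$, wrapped in the single-step constructor for 2-cells. First I would unfold the definition of composition at dimension~1. Writing the three composable 1-cells as $f = (a,b,p)$, $g = (b,c,q)$, and $h = (c,d,r)$ with $p : \Path(a,b)$, $q : \Path(b,c)$, $r : \Path(c,d)$, the left association computes to $(f \comp g) \comp h = (a,d,\trans(\trans(p,q),r))$, while the right association computes to $f \comp (g \comp h) = (a,d,\trans(p,\trans(q,r)))$. Both are 1-cells with the same endpoints $a$ and $d$, so producing the required associator amounts, working fibrewise over the fixed endpoints, to exhibiting a derivation in $\Cell_2(\trans(\trans(p,q),r),\,\trans(p,\trans(q,r)))$.

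The key observation is that these underlying paths are precisely the two sides of the rewrite rule $\rhd_{tt}$, namely $\trans(\trans(p,q),r) \Rw \trans(p,\trans(q,r))$. Since the rule fires at the top-level position, it yields an inhabitant $s : \mathrm{Step}(\trans(\trans(p,q),r),\,\trans(p,\trans(q,r)))$. Applying the $\mathrm{step}$ constructor from Definition~\ref{def:2cell} then gives the desired 2-cell
\[
\assoc_{f,g,h} := \mathrm{step}(s) : \Cell_2\bigl(\trans(\trans(p,q),r),\,\trans(p,\trans(q,r))\bigr),
\]
whose source is $(f \comp g) \comp h$ and whose target is $f \comp (g \comp h)$, matching the asserted direction of the rewrite rule.

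For the higher-dimensional statement, the same pattern repeats one level up: at dimension~2 the associator for vertical composition is supplied directly by the $\mathsf{vcomp\_assoc}(d,d',d'')$ constructor of $\mathrm{MetaStep}_3$ (Definition~\ref{def:metastep3}), whose source and target are $(d \comp d') \comp d''$ and $d \comp (d' \comp d'')$, and wrapping it in the level-3 $\mathrm{step}$ constructor produces the 3-cell associator; at dimensions $n \geq 4$ the analogous associativity generator of $\mathrm{MetaStep}_n$ plays the same role. I do not expect a serious obstacle: the mathematical content is entirely carried by the rewrite rules, so the only thing requiring care is endpoint bookkeeping, ensuring that $\src$ and $\tgt$ of the constructed cell are literally the two associated composites (and, at the higher levels, that the $\mathsf{vcomp\_assoc}$ boundaries are the intended associated compositions rather than re-deriving associativity from scratch).
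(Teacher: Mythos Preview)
Your proposal is correct and matches the paper's proof essentially line for line: unfold composition to expose $\trans(\trans(p,q),r)$ versus $\trans(p,\trans(q,r))$, apply the single rewrite step $\rhd_{tt}$, and wrap it via the $\mathrm{step}$ constructor; at dimension~2 use $\mathrm{step}(\mathsf{vcomp\_assoc}(d,d',d''))$, and at higher dimensions the analogous $\mathrm{MetaStep}_n$ constructor. The only addendum in the paper is a closing remark that contractibility (Theorem~\ref{thm:contractibility}) connects all higher associators, but this is a coherence observation rather than part of the construction itself.
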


\begin{proof}
\textbf{Dimension 1:} Let $f = (a, b, p)$, $g = (b, c, q)$, $h = (c, d, r)$. Then:
\begin{itemize}
    \item $(f \comp g) \comp h = (a, d, \trans(\trans(p, q), r))$
    \item $f \comp (g \comp h) = (a, d, \trans(p, \trans(q, r)))$
\end{itemize}
Both have the same source $a$ and target $d$. The rewrite system includes the rule:
\[
\trans(\trans(p, q), r) \Rw \trans(p, \trans(q, r)) \quad (\text{associativity rule } \rhd_{tt})
\]
Thus, the two paths are $\RwEq$-equivalent via the rewrite rule $\rhd_{tt}$. The associator 2-cell is the derivation:
\[
\assoc_{f,g,h} := \mathrm{step}(\rhd_{tt}) : \Cell_2(\trans(\trans(p,q),r), \, \trans(p,\trans(q,r)))
\]
This witnesses that the two associated compositions are equivalent paths.

\textbf{Dimension 2:} For derivations $d_1, d_2, d_3 : \Cell_2$, we construct $\assoc_{d_1,d_2,d_3} : \Cell_3$ using the $\mathsf{vcomp\_assoc}$ constructor from $\mathrm{MetaStep}_3$ (Definition~\ref{def:metastep3}):
\[
\assoc_{d_1,d_2,d_3} := \mathrm{step}(\mathsf{vcomp\_assoc}(d_1, d_2, d_3)) : \Cell_3((d_1 \comp d_2) \comp d_3, \, d_1 \comp (d_2 \comp d_3))
\]

\textbf{Dimension $\geq 3$:} For $n$-cells $c_1, c_2, c_3$ at dimension $n \geq 3$, the associator $(n+1)$-cell is constructed using the $\mathsf{vcomp\_assoc}$ constructor from $\mathrm{MetaStep}_{n+1}$. By contractibility (Theorem~\ref{thm:contractibility}), all higher associators are connected, ensuring coherence at all levels.
\end{proof}

\subsection{Unit Laws}

\begin{thm}[Left Unit Coherence]
\label{thm:left-unit}
For any 1-cell $f$, there exists a 2-cell:
\[
\lunit_f : \id(\src(f)) \comp f \longrightarrow f
\]
\end{thm}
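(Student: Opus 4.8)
The plan is to follow exactly the template of the dimension-1 case of Theorem~\ref{thm:assoc}: unpack the 1-cell, compute the relevant composition at the level of underlying paths, and exhibit a single rewrite step as the witnessing 2-cell. First I would write $f = (a, b, p)$ with $p : \Path(a, b)$, so that $\src(f) = a$ and hence $\id(\src(f)) = \id_0(a) = (a, a, \refl_a)$. Since $\tgt(\id_0(a)) = a = \src(f)$, the composite is defined, and by the dimension-1 composition rule $\id(\src(f)) \comp f = (a, b, \trans(\refl_a, p))$. The source/target lemma for composition then gives $\src(\id(\src(f)) \comp f) = a$ and $\tgt(\id(\src(f)) \comp f) = b$, which match $\src(f)$ and $\tgt(f)$; so the target 1-cell $f = (a, b, p)$ is parallel to the composite, and it suffices to produce a 2-cell between the underlying paths $\trans(\refl_a, p)$ and $p$.

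The key step is to observe that the $\mathrm{LND}_{\mathrm{EQ}}$-TRS contains precisely the left-unit rule $\rhd_{tlr} : \trans(\refl, r) \Rw r$. Instantiating $r := p$ gives a single rewrite step $\trans(\refl_a, p) \Rw p$, and by Definition~\ref{def:2cell} the $\mathrm{step}$ constructor turns this into a derivation. I would therefore set
\[
\lunit_f := \mathrm{step}(\rhd_{tlr}) : \Cell_2(\trans(\refl_a, p),\, p),
\]
which, reading off the source and target maps of Definition~\ref{def:2cell}, is exactly a 2-cell $\id(\src(f)) \comp f \longrightarrow f$, as required.

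There is no substantive obstacle here; the only point requiring care is the \emph{orientation} of the rewrite rule. The 2-cell must run from the composite $\trans(\refl_a, p)$ to $f$'s path $p$, and $\rhd_{tlr}$ is oriented left-to-right in exactly this direction, so no inversion is needed; had the rule been stated in the opposite direction we would have wrapped it in $\inv$. Finally, although the theorem is stated for 1-cells, the same pattern lifts verbatim to every higher dimension: for $n \geq 2$ the left-unit coherence is witnessed by $\mathrm{step}(\mathsf{vcomp\_refl\_left}(d))$ from $\mathrm{MetaStep}_3$ (and its analogues $\mathrm{MetaStep}_n$), exactly as the higher associators were built in Theorem~\ref{thm:assoc}, with contractibility (Theorem~\ref{thm:contractibility}) ensuring coherence of all such witnesses.
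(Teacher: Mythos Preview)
Your proposal is correct and follows essentially the same approach as the paper: unpack $f = (a,b,p)$, compute $\id(\src(f)) \comp f = (a,b,\trans(\refl_a,p))$, and set $\lunit_f := \mathrm{step}(\rhd_{tlr})$, with the higher-dimensional case handled by $\mathsf{vcomp\_refl\_left}$. Your added remarks on source/target matching and rule orientation are sound elaborations but do not change the argument.
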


\begin{proof}
Let $f = (a, b, p)$. Then:
\begin{itemize}
    \item $\id(\src(f)) = \id(a) = (a, a, \refl_a)$
    \item $\id(\src(f)) \comp f = (a, b, \trans(\refl_a, p))$
\end{itemize}
The rewrite system includes the rule:
\[
\trans(\refl, p) \Rw p \quad (\text{left unit rule } \rhd_{tlr})
\]
Thus $\trans(\refl, p) \RwEq p$. The left unitor is the derivation:
\[
\lunit_f := \mathrm{step}(\rhd_{tlr}) : \Cell_2(\trans(\refl, p), \, p)
\]
This witnesses that composing with the identity path on the left yields an equivalent path.

\textbf{Higher dimensions:} At dimension 2, for a derivation $d : \Cell_2(p, q)$, the left unitor $\lunit_d : \Cell_3$ is constructed using the $\mathsf{vcomp\_refl\_left}$ constructor from $\mathrm{MetaStep}_3$:
\[
\lunit_d := \mathrm{step}(\mathsf{vcomp\_refl\_left}(d)) : \Cell_3(\refl \comp d, \, d)
\]
At dimension $\geq 3$, the left unitor is similarly constructed using the $\mathsf{vcomp\_refl\_left}$ constructor from $\mathrm{MetaStep}_n$.
\end{proof}

\begin{thm}[Right Unit Coherence]
\label{thm:right-unit}
For any 1-cell $f$, there exists a 2-cell:
\[
\runit_f : f \comp \id(\tgt(f)) \longrightarrow f
\]
\end{thm}

\begin{proof}
Let $f = (a, b, p)$. Then $f \comp \id(\tgt(f)) = (a, b, \trans(p, \refl_b))$. The rewrite rule:
\[
\trans(p, \refl) \Rw p \quad (\text{right unit rule } \rhd_{trr})
\]
gives $\trans(p, \refl) \RwEq p$. The right unitor is the derivation:
\[
\runit_f := \mathrm{step}(\rhd_{trr}) : \Cell_2(\trans(p, \refl), \, p)
\]
This witnesses that composing with the identity path on the right yields an equivalent path. At higher dimensions, the right unitor is constructed using the $\mathsf{vcomp\_refl\_right}$ constructor from $\mathrm{MetaStep}_n$.
\end{proof}

\subsection{Inverse Laws}

\begin{thm}[Left Inverse Coherence]
\label{thm:left-inv}
For any 1-cell $f$, there exists a 2-cell:
\[
\linv{f} : \inv(f) \comp f \longrightarrow \id(\tgt(f))
\]
\end{thm}

\begin{proof}
Let $f = (a, b, p)$. Then:
\begin{itemize}
    \item $\inv(f) = (b, a, \symm(p))$
    \item $\inv(f) \comp f = (b, b, \trans(\symm(p), p))$
    \item $\id(\tgt(f)) = \id(b) = (b, b, \refl_b)$
\end{itemize}
The rewrite rule:
\[
\trans(\symm(p), p) \Rw \refl \quad (\text{left inverse rule } \rhd_{tsr})
\]
gives $\trans(\symm(p), p) \RwEq \refl$. The left inverse witness is the derivation:
\[
\linv{f} := \mathrm{step}(\rhd_{tsr}) : \Cell_2(\trans(\symm(p), p), \, \refl)
\]
At higher dimensions, the left inverse witness is constructed using the $\mathsf{vcomp\_inv\_left}$ constructor from $\mathrm{MetaStep}_n$.
\end{proof}

\begin{thm}[Right Inverse Coherence]
\label{thm:right-inv}
For any 1-cell $f$, there exists a 2-cell:
\[
\rinv{f} : f \comp \inv(f) \longrightarrow \id(\src(f))
\]
\end{thm}

\begin{proof}
Let $f = (a, b, p)$. Then $f \comp \inv(f) = (a, a, \trans(p, \symm(p)))$ and $\id(\src(f)) = (a, a, \refl_a)$. The rewrite rule:
\[
\trans(p, \symm(p)) \Rw \refl \quad (\text{right inverse rule } \rhd_{tr})
\]
gives $\trans(p, \symm(p)) \RwEq \refl$. The right inverse witness is the derivation:
\[
\rinv{f} := \mathrm{step}(\rhd_{tr}) : \Cell_2(\trans(p, \symm(p)), \, \refl)
\]
At higher dimensions, the right inverse witness is constructed using the $\mathsf{vcomp\_inv\_right}$ constructor from $\mathrm{MetaStep}_n$.
\end{proof}

\subsection{Double Symmetry}

\begin{thm}[Double Symmetry]
For any 1-cell $f$, there exists a 2-cell:
\[
\inv(\inv(f)) \longrightarrow f
\]
\end{thm}

\begin{proof}
Using $\symm(\symm(p)) \Rw p$ (rule $\rhd_{ss}$).
\end{proof}

\section{Higher Coherences}
\label{sec:higher}

A weak $\omega$-groupoid requires not just the basic coherence laws, but also higher coherences that relate them. The key ones are the pentagon and triangle identities.

\subsection{Whiskering}

Before stating the pentagon and triangle coherences, we define the whiskering operations that appear in their formulations.

\begin{defi}[Horizontal Composition]
\label{def:hcomp}
Given 2-cells $\alpha : \Cell_2(f, f')$ where $f, f' : a \to b$, and $\beta : \Cell_2(g, g')$ where $g, g' : b \to c$, the \emph{horizontal composition} is:
\[
\alpha \star_h \beta : \Cell_2(f \comp g, f' \comp g')
\]
Horizontal composition combines 2-cells ``side by side'' (along a shared boundary point), in contrast to vertical composition $\comp$ which combines 2-cells ``end to end'' (along a shared 1-cell).

In a general 2-category, horizontal composition is a primitive operation. In our setting, we can define it via whiskering:
\[
\alpha \star_h \beta := \mathrm{whiskerRight}(\alpha, g) \comp \mathrm{whiskerLeft}(f', \beta)
\]
or equivalently:
\[
\alpha \star_h \beta := \mathrm{whiskerLeft}(f, \beta) \comp \mathrm{whiskerRight}(\alpha, g')
\]
The interchange law (Theorem~\ref{thm:interchange}) ensures these two definitions yield equivalent 2-cells.
\end{defi}

\begin{defi}[Whiskering]
\label{def:whiskering}
Given a 2-cell $\alpha : \Cell_2(f, f')$ where $f, f' : a \to b$, and 1-cells $g : b \to c$ and $h : z \to a$, we define:
\begin{itemize}
    \item \textbf{Right whiskering}: $\mathrm{whiskerRight}(\alpha, g) : \Cell_2(f \comp g, f' \comp g)$
    \item \textbf{Left whiskering}: $\mathrm{whiskerLeft}(h, \alpha) : \Cell_2(h \comp f, h \comp f')$
\end{itemize}
These operations express the functoriality of composition: composing with a fixed 1-cell preserves 2-cells.
\end{defi}

\begin{rem}[Whiskering as Horizontal Composition]
Whiskering is horizontal composition with an identity 2-cell:
\begin{align*}
\mathrm{whiskerRight}(\alpha, g) &= \alpha \star_h \id_g \\
\mathrm{whiskerLeft}(h, \alpha) &= \id_h \star_h \alpha
\end{align*}
In our construction, whiskering is a primitive operation in $\mathrm{MetaStep}_3$ (Definition~\ref{def:metastep3}), and horizontal composition is derived from it.
\end{rem}

\subsection{Pentagon Coherence}

The pentagon identity states that the two ways of reassociating four composable cells are connected by a coherent family of associators.

\begin{thm}[Pentagon Coherence]
\label{thm:pentagon}
For composable 1-cells $f, g, h, k$, there exists a 3-cell:
\[
\pi_{f,g,h,k} : \Cell_3(\mathrm{pentagonLeft}(f,g,h,k), \, \mathrm{pentagonRight}(f,g,h,k))
\]
where:
\begin{align*}
\mathrm{pentagonLeft} &= \assoc_{f \comp g, h, k} \comp \assoc_{f, g, h \comp k} \\
\mathrm{pentagonRight} &= (\mathrm{whiskerRight}(\assoc_{f,g,h}, k) \comp \assoc_{f, g \comp h, k}) \comp \mathrm{whiskerLeft}(f, \assoc_{g,h,k})
\end{align*}
\end{thm}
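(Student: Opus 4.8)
The plan is to reduce the pentagon coherence to the contractibility property established in Theorem~\ref{thm:contractibility}. The key observation is that $\mathrm{pentagonLeft}(f,g,h,k)$ and $\mathrm{pentagonRight}(f,g,h,k)$ are \emph{parallel} 2-cells: they are derivations sharing the same source path and the same target path. Once parallelism is verified, a connecting 3-cell exists automatically, since at dimension $\geq 3$ any two parallel cells are joined by a higher cell.

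First I would compute the boundaries of each side. For the left-hand side, $\assoc_{f \comp g, h, k}$ is a derivation from $((f \comp g) \comp h) \comp k$ to $(f \comp g) \comp (h \comp k)$, and $\assoc_{f, g, h \comp k}$ runs from $(f \comp g) \comp (h \comp k)$ to $f \comp (g \comp (h \comp k))$; these are vertically composable, so the composite $\mathrm{pentagonLeft}$ has source path $((f \comp g) \comp h) \comp k$ and target path $f \comp (g \comp (h \comp k))$. For the right-hand side, the three factors $\mathrm{whiskerRight}(\assoc_{f,g,h}, k)$, $\assoc_{f, g \comp h, k}$, and $\mathrm{whiskerLeft}(f, \assoc_{g,h,k})$ chain through the intermediate paths $(f \comp (g \comp h)) \comp k$ and $f \comp ((g \comp h) \comp k)$, so that $\mathrm{pentagonRight}$ also runs from $((f \comp g) \comp h) \comp k$ to $f \comp (g \comp (h \comp k))$. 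Thus the two sides have identical source and target, i.e.\ they are parallel.

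With parallelism in hand, I would invoke Theorem~\ref{thm:contractibility} at dimension $n = 2$ to obtain the desired 3-cell
\[
\pi_{f,g,h,k} := \chi_3(\mathrm{pentagonLeft}, \mathrm{pentagonRight}) : \Cell_3(\mathrm{pentagonLeft}, \mathrm{pentagonRight}).
\]
This is well-defined because the common source and target paths are rewrite-equivalent---witnessed by $\mathrm{pentagonLeft}$ itself via Lemma~\ref{lem:rweq-cell2}---so that the canonical derivation $\gamma$ of Definition~\ref{def:canonical} exists and the canonicity axiom applies. Equivalently, one may take $\pi_{f,g,h,k} := \mathrm{step}(\mathsf{pentagon}(f,g,h,k))$ using the primitive $\mathsf{pentagon}$ constructor of $\mathrm{MetaStep}_3$ (Definition~\ref{def:metastep3}); contractibility then guarantees that any two such witnesses are themselves connected by a 4-cell, so the choice is immaterial.

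The main obstacle is purely the boundary bookkeeping of the second paragraph: one must track the source and target 1-cells through each whiskering and each vertical composition and confirm that the intermediate paths line up, so that the three-fold composite on the right is well-formed and lands on the same endpoints as the left. Once this boundary computation is discharged, the coherence itself is immediate from contractibility---reflecting the fact that, under the canonicity axiom, all higher coherences between parallel 2-cells are forced.
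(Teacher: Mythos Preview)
Your proposal is correct, and your boundary computation is accurate: both sides run from $((f \comp g) \comp h) \comp k$ to $f \comp (g \comp (h \comp k))$, so they are parallel 2-cells and Theorem~\ref{thm:contractibility} applies.

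The route, however, differs from the paper's. The paper constructs $\pi_{f,g,h,k}$ directly as $\mathrm{step}(\mathsf{pentagon}(f,g,h,k))$ using the primitive $\mathsf{pentagon}$ constructor of $\mathrm{MetaStep}_3$---this is the paper's entire argument. You instead take contractibility (derived from the canonicity axiom) as primary and obtain $\pi_{f,g,h,k} := \chi_3(\mathrm{pentagonLeft},\mathrm{pentagonRight}) = \mathsf{can}_{\mathrm{pentagonLeft}} \comp \inv(\mathsf{can}_{\mathrm{pentagonRight}})$, mentioning the primitive constructor only as an equivalent alternative. Your approach is arguably more economical: it shows that the pentagon coherence is already forced by canonicity, so the dedicated $\mathsf{pentagon}$ constructor is redundant for mere existence of the 3-cell. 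The paper's approach, by contrast, keeps $\mathsf{pentagon}$ as a named primitive in $\mathrm{MetaStep}_3$, which matches standard bicategorical presentations where the pentagon is a distinguished coherence datum rather than an instance of a generic filler; this can matter if one later wants to reason about the \emph{specific} pentagon witness (e.g., for comparison with a semantic model) rather than an arbitrary one supplied by contractibility. You correctly observe that the two choices are connected by a 4-cell, so for the statement as written the distinction is immaterial.
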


\begin{proof}
The pentagon 3-cell is constructed as a \emph{primitive} $\mathrm{MetaStep}_3$:
\[
\mathrm{pentagon}(f, g, h, k) : \mathrm{MetaStep}_3(\mathrm{pentagonLeft}, \mathrm{pentagonRight})
\]
This encodes the pentagon identity directly from the interaction of the associativity rewrite rules in the $\mathrm{LND}_{\mathrm{EQ}}$-TRS. Specifically, it states that the two composite associators around the pentagon are connected by a primitive 3-cell.
\end{proof}

\subsection{Triangle Coherence}

The triangle identity ensures compatibility between the associator and the unit laws.

\begin{thm}[Triangle Coherence]
\label{thm:triangle}
For composable 1-cells $f, g$, there exists a 3-cell:
\[
\theta_{f,g} : \Cell_3(\mathrm{triangleLeft}(f,g), \, \mathrm{triangleRight}(f,g))
\]
where:
\begin{align*}
\mathrm{triangleLeft} &= \assoc_{f, \id, g} \comp \mathrm{whiskerLeft}(f, \lunit_g) \\
\mathrm{triangleRight} &= \mathrm{whiskerRight}(\runit_f, g)
\end{align*}
\end{thm}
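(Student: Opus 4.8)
The plan is to construct the triangle $3$-cell by the same strategy used for the pentagon (Theorem~\ref{thm:pentagon}): realize it as a primitive inhabitant of $\mathrm{MetaStep}_3$ and then lift it to a $\Cell_3$ via the $\mathrm{step}$ constructor. Before doing so, I would first verify that $\mathrm{triangleLeft}(f,g)$ and $\mathrm{triangleRight}(f,g)$ are genuinely parallel $2$-cells, since this well-typedness check is the prerequisite for any $3$-cell between them to exist.

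Concretely, writing $f = (a,b,p)$ and $g = (b,c,q)$, I would compute the boundaries of both sides. On the left, $\assoc_{f,\id,g}$ runs from $(f \comp \id) \comp g$ to $f \comp (\id \comp g)$, and $\mathrm{whiskerLeft}(f, \lunit_g)$ runs from $f \comp (\id \comp g)$ to $f \comp g$; their vertical composite therefore has source $(f \comp \id) \comp g$ and target $f \comp g$. On the right, $\runit_f$ runs from $f \comp \id$ to $f$, so $\mathrm{whiskerRight}(\runit_f, g)$ runs from $(f \comp \id) \comp g$ to $f \comp g$. Hence both sides are $2$-cells sharing the common source $(f \comp \id) \comp g$ and common target $f \comp g$, which confirms parallelism.

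With parallelism established, the triangle $3$-cell is supplied by the primitive constructor $\mathsf{triangle}(f,g) : \mathrm{MetaStep}_3(\mathrm{triangleLeft}, \mathrm{triangleRight})$ from Definition~\ref{def:metastep3}, and I would set $\theta_{f,g} := \mathrm{step}(\mathsf{triangle}(f,g))$. This encodes the compatibility between the associator rewrite $\rhd_{tt}$ and the unit rewrites $\rhd_{trr}$, $\rhd_{tlr}$ of the $\mathrm{LND}_{\mathrm{EQ}}$-TRS at the level of derivations. Alternatively---and this is worth recording---once parallelism is verified, contractibility at dimension $3$ (Theorem~\ref{thm:contractibility}) already yields a $3$-cell $\chi_3(\mathrm{triangleLeft}, \mathrm{triangleRight}) : \Cell_3(\mathrm{triangleLeft}, \mathrm{triangleRight})$, so the triangle coherence is in fact forced by the canonicity axiom independently of the primitive constructor.

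The main obstacle I anticipate is not the existence of the $3$-cell but the bookkeeping in the parallelism check: the whiskering operations and the associator must be tracked carefully so that the intermediate $1$-cell $f \comp (\id \comp g)$ cancels correctly and both composites land on the same source/target pair. Since the paper's whiskering is itself a primitive of $\mathrm{MetaStep}_3$ (Definition~\ref{def:whiskering}) rather than a derived operation, the only genuine content is confirming that these primitives carry the boundaries asserted in the statement; once that is settled, the $\mathrm{step}$ lift (or, equally, contractibility) closes the argument immediately.
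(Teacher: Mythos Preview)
Your proposal is correct and follows essentially the same approach as the paper: invoke the primitive constructor $\mathsf{triangle}(f,g) : \mathrm{MetaStep}_3(\mathrm{triangleLeft}, \mathrm{triangleRight})$ from Definition~\ref{def:metastep3} and lift it to $\Cell_3$ via $\mathrm{step}$. Your added parallelism check and the alternative route through contractibility (Theorem~\ref{thm:contractibility}) are sound elaborations that the paper omits, but the core argument is identical.
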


\begin{proof}
The triangle 3-cell is constructed as a primitive $\mathrm{MetaStep}_3$:
\[
\mathrm{triangle}(f, g) : \mathrm{MetaStep}_3(\mathrm{triangleLeft}, \mathrm{triangleRight})
\]
This encodes the triangle identity directly from the interaction of associativity and unit rewrite rules.
\end{proof}

\subsection{Interchange}

The interchange law ensures that horizontal and vertical composition commute appropriately.

\begin{thm}[Interchange]
\label{thm:interchange}
For 2-cells $\alpha : f \to f'$ and $\beta : g \to g'$ where $f, f' : a \to b$ and $g, g' : b \to c$, there exists a 3-cell:
\[
\mathrm{interchange}_{\alpha,\beta} : (\alpha \star_h \id_g) \comp (\id_{f'} \star_h \beta) \longrightarrow (\id_f \star_h \beta) \comp (\alpha \star_h \id_{g'})
\]
where $\star_h$ denotes horizontal composition (whiskering).
\end{thm}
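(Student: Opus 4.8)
The plan is to follow the template already used for the pentagon (Theorem~\ref{thm:pentagon}) and triangle (Theorem~\ref{thm:triangle}) coherences: the interchange 3-cell is furnished by the primitive constructor $\mathsf{interchange}(\alpha,\beta)$ of $\mathrm{MetaStep}_3$ declared in Definition~\ref{def:metastep3}, read off exactly as pentagon and triangle were read off their respective primitives. Before invoking that constructor, however, I would first verify that the two composites named in the statement are genuinely \emph{parallel} 2-cells, so that a 3-cell between them is type-correct in the first place.

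First I would compute the boundaries by unfolding the whiskering operations of Definition~\ref{def:whiskering}. Since $\alpha : \Cell_2(f,f')$ and $\id_g : \Cell_2(g,g)$, we get $\alpha \star_h \id_g : \Cell_2(f \comp g,\, f' \comp g)$, and dually $\id_{f'} \star_h \beta : \Cell_2(f' \comp g,\, f' \comp g')$, so their vertical composite lies in $\Cell_2(f \comp g,\, f' \comp g')$. Running the same computation on the right-hand composite gives $(\id_f \star_h \beta)\comp(\alpha \star_h \id_{g'}) : \Cell_2(f \comp g,\, f' \comp g')$ as well. Hence both sides share source path $f \comp g$ and target path $f' \comp g'$; they are parallel. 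With parallelism in hand, I would set $\mathrm{interchange}_{\alpha,\beta} := \mathrm{step}(\mathsf{interchange}(\alpha,\beta)) : \Cell_3(\text{interchangeLeft}, \text{interchangeRight})$.

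A second, conceptual route confirms the same fact and shows the result is not arbitrary. Unfolding the two expressions for horizontal composition in Definition~\ref{def:hcomp}, the identity-2-cell whiskerings collapse—$\mathrm{whiskerLeft}(f',\id_g)$ and $\mathrm{whiskerRight}(\id_{f'},g)$ each reduce to an identity 2-cell—so up to unit 3-cells the left-hand composite is precisely the first presentation $\mathrm{whiskerRight}(\alpha,g)\comp\mathrm{whiskerLeft}(f',\beta)$ of $\alpha\star_h\beta$, while the right-hand composite is precisely the second presentation $\mathrm{whiskerLeft}(f,\beta)\comp\mathrm{whiskerRight}(\alpha,g')$. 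Interchange is therefore literally the assertion promised in Definition~\ref{def:hcomp} that the two definitions of $\alpha\star_h\beta$ agree up to a 3-cell. Since the two sides are parallel 2-cells, this 3-cell is in any case delivered by contractibility at dimension 3 (Theorem~\ref{thm:contractibility}): $\chi_3$ applied to the two composites yields a connecting 3-cell with no further work, so the primitive $\mathsf{interchange}$ and the contractibility witness are mutually consistent.

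I expect the only genuine obstacle to be the boundary bookkeeping through whiskering together with the identity-2-cell simplifications—tracking which whiskering produces which source and target path, and confirming that whiskering a reflexivity 2-cell collapses to the appropriate identity 2-cell so that both sides reduce, up to unit 3-cells, to the two presentations of $\alpha\star_h\beta$. Once these routine identifications are checked, the 3-cell itself is immediate, either as the primitive $\mathsf{interchange}$ step or, redundantly, via contractibility.
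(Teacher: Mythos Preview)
Your proposal is correct and takes essentially the same approach as the paper: the paper's entire proof is the single line ``Constructed as a primitive $\mathrm{MetaStep}_3.\mathrm{interchange}(\alpha, \beta)$,'' which is exactly your $\mathrm{step}(\mathsf{interchange}(\alpha,\beta))$. Your additional boundary verification and the alternative route via contractibility are sound elaborations, but they go beyond what the paper actually supplies.
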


\begin{proof}
Constructed as a primitive $\mathrm{MetaStep}_3.\mathrm{interchange}(\alpha, \beta)$.
\end{proof}

\subsection{Contractibility at Dimension $\geq 3$}

The defining property of a weak $\omega$-groupoid is that parallel cells at each dimension are connected by a cell one dimension higher.

As established in Theorem~\ref{thm:contractibility}, for any $n \geq 2$ and any two parallel $n$-cells $c_1, c_2$, there exists an $(n+1)$-cell $\chi_n(c_1, c_2) : \Cell_{n+1}(c_1, c_2)$. This is \emph{derived} from the canonicity axiom (Definition~\ref{def:canonicity}).

\begin{rem}[Non-truncation]
Unlike previous approaches that resulted in 2-truncated structures, our construction provides a \emph{full} weak $\omega$-groupoid with non-trivial cells at all dimensions. The key is that:
\begin{enumerate}
    \item $\Cell_n$ for $n \geq 2$ lives in the data universe, not propositions
    \item The canonicity axiom is a \emph{primitive} meta-step connecting derivations to a canonical form
    \item Contractibility is \emph{derived} from the canonicity axiom via the construction $\chi_n(c_1, c_2) := \mathsf{can}_{c_1} \comp \inv(\mathsf{can}_{c_2})$
\end{enumerate}
This matches the Lumsdaine/van den Berg-Garner construction where J-elimination provides contractibility.
\end{rem}

\section{Main Theorem}
\label{sec:main}

We now assemble the preceding results into our main theorem.

\begin{thm}[Computational Paths Form a Weak $\omega$-Groupoid]
\label{thm:main}
For any type $A$, the computational paths on $A$ form a weak $\omega$-groupoid with:
\begin{enumerate}
    \item Cell types $\Cell_n(A)$ at each dimension $n \geq 0$
    \item Source and target maps satisfying globular identities (Theorem~\ref{thm:globular})
    \item Identity operation with correct source/target
    \item Composition operation for composable cells with correct source/target
    \item Inverse operation with exchanged source/target
    \item Coherence witnesses:
    \begin{itemize}
        \item Associativity $\assoc_{f,g,h}$ (Theorem~\ref{thm:assoc})
        \item Left unit $\lunit_f$ (Theorem~\ref{thm:left-unit})
        \item Right unit $\runit_f$ (Theorem~\ref{thm:right-unit})
        \item Left inverse $\linv{f}$ (Theorem~\ref{thm:left-inv})
        \item Right inverse $\rinv{f}$ (Theorem~\ref{thm:right-inv})
    \end{itemize}
    \item Higher coherences:
    \begin{itemize}
        \item Pentagon coherence (Theorem~\ref{thm:pentagon})
        \item Triangle coherence (Theorem~\ref{thm:triangle})
    \end{itemize}
    \item Contractibility at dimension $\geq 3$ (Theorem~\ref{thm:contractibility})
\end{enumerate}
\end{thm}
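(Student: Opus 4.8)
The plan is to verify, clause by clause, that the constructions of the preceding sections satisfy each requirement in the Batanin/Leinster definition of a weak $\omega$-groupoid recalled in Section~\ref{sec:background}. Since every structural ingredient has already been built and its defining property established, the theorem is fundamentally an \emph{assembly}: I would organize the proof as a checklist that cites the relevant construction or lemma for each requirement, and then argue that contractibility supplies precisely what the explicit constructions leave open.

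First I would exhibit the underlying globular set: the cell types $\Cell_n(A)$ are defined in Section~\ref{sec:cells} for all $n$, and the source/target maps satisfy the globular identities by Theorem~\ref{thm:globular}. Next I would record the operations---identity $\id_n$, vertical composition $\comp$, and inverse $\inv$---defined uniformly at every dimension in Section~\ref{sec:operations}, together with the source/target lemmas confirming that $\src$ and $\tgt$ behave correctly (so that $\src(f \comp g) = \src(f)$, $\tgt(\inv(f)) = \src(f)$, and so on). Then I would invoke the explicit low-dimensional coherence witnesses: the associator $\assoc_{f,g,h}$ (Theorem~\ref{thm:assoc}), the unitors $\lunit_f$ and $\runit_f$ (Theorems~\ref{thm:left-unit} and~\ref{thm:right-unit}), and the inverse witnesses $\linv{f}$ and $\rinv{f}$ (Theorems~\ref{thm:left-inv} and~\ref{thm:right-inv}), each realized as a concrete $\mathrm{step}$ from an $\mathrm{LND}_{\mathrm{EQ}}$-TRS rule. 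The higher coherences---pentagon (Theorem~\ref{thm:pentagon}), triangle (Theorem~\ref{thm:triangle}), and interchange (Theorem~\ref{thm:interchange})---are supplied as primitive $\mathrm{MetaStep}_3$ constructors with the stated boundaries. Invertibility, the extra condition distinguishing an $\omega$-groupoid from a mere $\omega$-category, then follows because $\inv$ is defined at every level and the inverse-law theorems furnish the $(n+2)$-cells witnessing $\inv(f) \comp f \sim \id$ and $f \comp \inv(f) \sim \id$.

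The substantive step---and the one I expect to be the genuine obstacle---is accounting for the \emph{infinite tower} of higher coherences that the definition demands: beyond the pentagon and triangle, a weak $\omega$-category requires coherences relating these, and coherences relating those, without end. Constructing each explicitly would be hopeless, so the whole argument turns on discharging them uniformly. The resolution is that Theorem~\ref{thm:contractibility} (contractibility at dimension $\geq 3$) handles them all at once: any two parallel cells of dimension $\geq 2$ are connected by a cell one dimension higher, so every prospective higher coherence---being a cell whose two boundaries are parallel---automatically \emph{exists}, and any two candidate witnesses for it are themselves connected one dimension further up. I would therefore argue that the explicitly constructed operations together with the finitely many named coherences seed the structure in low dimensions, while contractibility guarantees that the remaining unbounded hierarchy of coherence data is simultaneously inhabited and coherent. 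The crux is thus reducing the open-ended contractibility-of-the-operad condition of the Batanin/Leinster definition to the single derived statement of Theorem~\ref{thm:contractibility}, whose force in turn rests on the canonicity axiom (Definition~\ref{def:canonicity}) and, ultimately, on normalization and confluence of the rewrite system.
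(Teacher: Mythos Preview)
Your proposal is correct and takes essentially the same approach as the paper: both treat the theorem as an assembly of the previously established ingredients, citing the cell tower, the globular identities, the operations with their source/target lemmas, the explicit $\mathrm{LND}_{\mathrm{EQ}}$-TRS coherence witnesses, the primitive pentagon/triangle/interchange constructors, and finally contractibility (via the canonicity axiom) to discharge the unbounded hierarchy of higher coherences. The only cosmetic difference is that you organize the checklist by the clauses of the Batanin/Leinster definition whereas the paper organizes it dimension-by-dimension (0--1, 2, 3, $\geq 4$); your explicit articulation of \emph{why} contractibility suffices---that every required higher coherence cell has parallel boundaries and is therefore automatically inhabited---is if anything more transparent than the paper's terse ``This ensures all higher coherences are satisfied.''
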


\begin{proof}
Each component has been proven in the preceding sections. The key points are:

\textbf{Dimensions 0-1:} These contain genuine content---points of $A$ and paths between them. The groupoid operations are the standard ones from the theory of computational paths.

\textbf{Dimension 2:} 2-cells are \emph{derivations} between parallel paths, built inductively from $\mathrm{Step}$, inverse, and composition. The coherence laws (associativity, units, inverses) hold up to 3-cells because the $\mathrm{LND}_{\mathrm{EQ}}$-TRS rewrite rules provide the necessary witnesses. For example, $\trans(\trans(p,q),r) \RwEq \trans(p,\trans(q,r))$ gives the associator as a 2-cell.

\textbf{Dimension 3:} 3-cells connect parallel 2-cells (derivations with the same source and target paths). The pentagon and triangle coherences are primitive constructors of $\mathrm{MetaStep}_3$. The canonicity axiom provides $\mathsf{can}_d : \Cell_3(d, \gamma_{p,q})$ for each derivation $d$, from which contractibility is derived.

\textbf{Dimension $\geq 4$:} Higher cells continue the pattern: $n$-cells connect parallel $(n-1)$-cells. Contractibility is derived from the canonicity axiom: given any two parallel cells $c_1, c_2 : \Cell_n(x, y)$ for $n \geq 2$, we construct $\chi_n(c_1, c_2) := \mathsf{can}_{c_1} \comp \inv(\mathsf{can}_{c_2}) : \Cell_{n+1}(c_1, c_2)$. This ensures all higher coherences are satisfied.

The structure is a \emph{full} weak $\omega$-groupoid following the Lumsdaine/van den Berg-Garner definition, with contractibility at dimension $\geq 3$ derived from the canonicity axiom.
\end{proof}

\subsection{Comparison with Traditional Results}
\label{sec:comparison}

\begin{thm}[Comparison with Identity Type $\omega$-Groupoid]
\label{thm:comparison}
The weak $\omega$-groupoid structure on computational paths corresponds to the identity type $\omega$-groupoid from \cite{Lumsdaine2009, BergGarner2011} in the following sense:

\begin{enumerate}
    \item At dimensions 0 and 1, the structures coincide exactly.
    \item At dimension 2, there is a bijective correspondence: a 2-cell in our structure (witnessing $p \RwEq q$) corresponds to an inhabitant of $\Id_{\Path(a,b)}(p, q)$ in the identity type formulation.
    \item At dimension $\geq 3$, both structures exhibit contractibility for the class of types we consider.
\end{enumerate}
\end{thm}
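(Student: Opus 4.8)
The plan is to prove the three claims dimension-by-dimension, using the reformulated identity type (the rules $(\Id\text{-F})$ and $(\Id\text{-I})$) to translate between computational-path data and identity-type inhabitants, and Lemma~\ref{lem:rweq-cell2} to connect $\RwEq$ with $\Cell_2$.

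For claim (1) I would first note that $\Cell_0(A) = A$ holds definitionally, and that the $0$-cells of the identity-type $\omega$-groupoid are likewise the elements of $A$, so the two agree on the nose. At dimension $1$, the introduction rule $(\Id\text{-I})$ sends each computational path $p : \Path(a,b)$ to a witness $p(a,b) : \Id_A(a,b)$, and in this formulation every inhabitant of $\Id_A(a,b)$ arises in this way; this identifies $\Cell_1(A)$ with the $1$-cells of the identity-type groupoid, and one checks that $\src$, $\tgt$, $\id_0$, $\comp$, and $\inv$ translate to the corresponding identity-type operations ($\refl$, $\trans$-style concatenation, $\symm$-style reversal).

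For claim (2) I would construct the bijection at the level of structured data by matching constructors. The forward map sends a derivation $d : \Cell_2(p,q)$ to a witness of $\Id_{\Path(a,b)}(p,q)$ as follows: by Lemma~\ref{lem:rweq-cell2}~$(2 \Rightarrow 1)$ the derivation yields $p \RwEq q$, and applying $(\Id\text{-I})$ on the path space $\Path(a,b)$ produces the required inhabitant; concretely $\refl$, $\mathrm{step}(s)$, $\inv$, and $\comp$ map to reflexivity, the underlying rewrite, $\symm$, and $\trans$ respectively. The backward map reads an inhabitant of $\Id_{\Path(a,b)}(p,q)$ as a computational path on $\Path(a,b)$---a finite sequence of rewrite steps---and reassembles it into a derivation via the $\mathrm{step}$ and $\comp$ constructors of Definition~\ref{def:2cell}, exactly as in Lemma~\ref{lem:rweq-cell2}~$(1 \Rightarrow 2)$. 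I expect the mutual-inverse verification to be the \emph{main obstacle}, precisely because $\Cell_2$ is genuine data (distinct derivations are distinguishable) whereas $\RwEq$ and the reformulated identity type at this level are proof-irrelevant (Remark~\ref{rem:proof-irrel}). The resolution is that the two round-trips agree not strictly but up to a $3$-cell: by the canonicity axiom (Definition~\ref{def:canonicity}) every derivation connects to the canonical $\gamma_{p,q}$, so any two derivations sharing endpoints are joined by a $3$-cell (Theorem~\ref{thm:contractibility}). Thus the correspondence descends to a genuine bijection on connected components of $\Cell_2(p,q)$, which is exactly the level at which $\Id_{\Path(a,b)}(p,q)$ is determined.

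For claim (3) I would supply contractibility on the computational-paths side directly from Theorem~\ref{thm:contractibility}: for $n \geq 2$ any two parallel $n$-cells are connected by an $(n+1)$-cell, via $\chi_n(c_1,c_2) = \mathsf{can}_{c_1} \comp \inv(\mathsf{can}_{c_2})$. On the identity-type side, the Lumsdaine/van den Berg--Garner construction obtains the analogous contractibility from $J$-elimination (path induction), which forces all higher identity proofs to be connected. The step to make explicit---already anticipated in Remark~\ref{rem:hott-comparison}---is that the canonicity axiom occupies exactly the structural position of $J$ in this construction: both encode that identity proofs are unique up to higher identification. Matching the two contractibility witnesses dimension-by-dimension then yields claim (3) and completes the comparison.
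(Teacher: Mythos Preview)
Your proposal is correct and follows essentially the same dimension-by-dimension strategy as the paper's own (sketch) proof: the $(\Id\text{-I})$/$\mathrm{REWR}$ correspondence at dimension~1, the $\RwEq \leftrightarrow \Id_{\Path}$ correspondence at dimension~2, and the canonicity-axiom/$J$-elimination parallel at dimensions $\geq 3$. If anything you are more careful than the paper at dimension~2, where you explicitly surface the data-versus-proposition mismatch between $\Cell_2$ and the proof-irrelevant $\RwEq$ and resolve it by passing to connected components via Theorem~\ref{thm:contractibility}---a subtlety the paper's sketch simply elides with the phrase ``both capture sameness of paths.''
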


\begin{proof}[Sketch]
The computational paths identity type and the traditional identity type are propositionally equivalent at dimension 1: a path $a =_s b$ gives rise to $s(a,b) : \Id_A(a,b)$, and conversely, any term of the identity type can be represented as a computational path via the $\mathrm{REWR}$ eliminator.

At dimension 2, the correspondence $\RwEq \leftrightarrow \Id_{\Path}$ holds because both capture ``sameness of paths.''

At dimension 3 and above, contractibility is derived from the canonicity axiom, which is analogous to the J-elimination principle in homotopy type theory. The canonicity axiom states that every derivation connects to a canonical derivation (through normal forms), from which contractibility follows: $\chi_n(c_1, c_2) := \mathsf{can}_{c_1} \comp \inv(\mathsf{can}_{c_2})$.
\end{proof}

\begin{rem}[Semantic Justification of the Canonicity Axiom]
The canonicity axiom is justified semantically by the normalization and confluence properties of the $\mathrm{LND}_{\mathrm{EQ}}$-TRS rewrite system \cite{Ruy4, Ramos2018ExplicitPaths}, combined with the fact that rewrite steps live in a proof-irrelevant universe. Because Step is proof-irrelevant and the system is normalizing and confluent, all derivations between the same endpoints can be connected to the same canonical derivation $\gamma_{p,q}$ through normal forms.
\end{rem}

\begin{rem}[Role of the Canonicity Axiom]
The canonicity axiom plays a role analogous to the J-elimination rule in Martin-L\"of type theory. Just as J cannot be derived from pure type theory but must be postulated (justified by the computation rule), the canonicity axiom cannot be derived from pure groupoid algebra but is justified by the computational properties of the underlying rewrite system---specifically, the existence of normal forms and the canonical derivation $\gamma_{p,q} = \delta_p \comp \inv(\delta_q)$. This axiom is the key ingredient that elevates our structure from a weak 2-groupoid to a full weak $\omega$-groupoid. Importantly, unlike a bare contractibility axiom, the canonicity axiom has a concrete, canonical target grounded in the normalization algorithm.
\end{rem}

\section{Conclusion}
\label{sec:conclusion}

\subsection{Summary}

We have provided a complete pen-and-paper proof that computational paths form a \emph{full} weak $\omega$-groupoid in the sense of Lumsdaine and van den Berg-Garner. This result:

\begin{enumerate}
    \item \textbf{Correct 2-cell definition:} 2-cells are \emph{explicitly witnessed} by $\RwEq$ proofs, ensuring that 2-cells only exist between rewrite-equivalent paths. This avoids the ``indiscrete'' structure that would result from merely requiring parallelism.

    \item \textbf{Genuine coherences from LNDEQ-TRS:} The associator uses $\mathsf{trans\_assoc}$, unitors use $\mathsf{trans\_refl\_*}$, and inverse laws use $\mathsf{trans\_symm}$/$\mathsf{symm\_trans}$. These are not vacuous---they genuinely construct the coherence witnesses from the rewrite rules.

    \item \textbf{Full tower structure:} We define cells at all dimensions: $\Cell_2$ (derivations between paths), $\Cell_3$ (3-cells between derivations), $\Cell_4$ (4-cells), and $\Cell_n$ for arbitrary $n$. This gives a proper weak $\omega$-groupoid, not merely a 2-groupoid.

    \item \textbf{Canonicity axiom:} The key ingredient is the canonicity axiom, which states that every derivation connects to a canonical derivation $\gamma_{p,q}$ through normal forms. This axiom, analogous to J-elimination in HoTT, is justified semantically by normalization and confluence of the underlying rewrite system \cite{Ruy4, Ramos2018ExplicitPaths}. Unlike a bare contractibility axiom, the canonicity axiom has a concrete, canonical target.

    \item \textbf{Derived contractibility:} At dimension $\geq 3$, contractibility is \emph{derived} from the canonicity axiom via a simple construction: $\chi_n(c_1, c_2) := \mathsf{can}_{c_1} \comp \inv(\mathsf{can}_{c_2})$. This is both elegant and principled---contractibility follows from the existence of canonical forms rather than being postulated directly.
\end{enumerate}

\subsection{Relation to the Formalization}

The constructions in this paper have been formalized in Lean 4 and are available at \cite{ComputationalPathsLean}. The formalization implements:

\begin{itemize}
    \item The full cell tower ($\Cell_2$, $\Cell_3$, $\Cell_4$, and $\Cell_n$ for higher $n$)
    \item All groupoid operations at each level with correct source/target behavior
    \item The canonicity axiom connecting derivations to canonical forms
    \item Derived contractibility functions $\chi_n$ at each level
    \item Pentagon and triangle coherences as primitive constructors
    \item The interchange law and all required higher coherences
\end{itemize}

The formalized code closely mirrors the pen-and-paper proofs presented here, providing machine-checked verification of the weak $\omega$-groupoid structure. The canonicity axiom yields particularly elegant code: contractibility at each level is derived via a simple composition with the canonical derivation.

\subsection{Future Work}

Several directions for future research emerge:

\begin{enumerate}
    \item \textbf{Non-contractible higher structure:} Our weak $\omega$-groupoid has contractible higher cells (dimension $\geq 3$). To capture types with genuinely non-trivial higher homotopy (e.g., $S^1$ with $\pi_1 \cong \mathbb{Z}$), one would need:
    \begin{itemize}
        \item Placing $\RwEq$ in Type rather than Prop, making rewrite derivations distinguishable data
        \item Developing a meta-level rewrite system for comparing rewrite derivations
        \item Extending the Derivation tower with non-trivial higher cells
    \end{itemize}

    \item \textbf{Univalence:} Investigate computational versions of the univalence axiom in the context of computational paths.

    \item \textbf{Higher Inductive Types:} Extend the framework to handle HITs with explicit path constructors.

    \item \textbf{Decidability:} Study decidability properties of $\RwEq$ and implications for type checking.

    \item \textbf{Applications:} Apply explicit coherence witnesses to problems in algebra and topology.
\end{enumerate}


\section*{Acknowledgment}

The authors thank the anonymous reviewers for their helpful comments and suggestions.


\bibliographystyle{alphaurl}
\bibliography{references}

@article{Queiroz2016Paths,
  title = {Propositional equality, identity types, and direct computational paths},
  author = {de Queiroz, Ruy J. G. B. and de Oliveira, Anjolina G. and Ramos, Arthur F.},
  journal = {South American Journal of Logic},
  volume = {2},
  number = {2},
  pages = {245--296},
  year = {2016},
  note = {Special Issue: A Festschrift for Francisco Miraglia}
}

@article{Ramos2018ExplicitPaths,
  title = {Explicit Computational Paths},
  author = {Ramos, Arthur F. and de Queiroz, Ruy J. G. B. and de Oliveira, Anjolina G. and de Veras, Tiago M. L.},
  journal = {South American Journal of Logic},
  volume = {4},
  number = {2},
  pages = {441--484},
  year = {2018},
  issn = {2446-6719},
  url = {https://www.sa-logic.org/sajl-v4-i2/10-Ramos-de%20Queiroz-de%20Oliveira-de-Veras-SAJL.pdf}
}

@article{Ramos2017IdentityPaths,
  title = {On the Identity Type as the Type of Computational Paths},
  author = {Ramos, Arthur F. and de Queiroz, Ruy J. G. B. and de Oliveira, Anjolina G.},
  journal = {Logic Journal of the IGPL},
  volume = {25},
  number = {4},
  pages = {562--584},
  year = {2017},
  publisher = {Oxford University Press},
  doi = {10.1093/jigpal/jzx015},
  url = {https://ieeexplore.ieee.org/document/8204959}
}

@article{Veras2023WeakGroupoid,
  title = {Computational Paths -- a Weak Groupoid},
  author = {de Veras, Tiago M. L. and Ramos, Arthur F. and de Queiroz, Ruy J. G. B. and de Oliveira, Anjolina G.},
  journal = {Journal of Logic and Computation},
  volume = {35},
  number = {5},
  pages = {exad071},
  year = {2023},
  doi = {10.1093/logcom/exad071},
  url = {https://doi.org/10.1093/logcom/exad071}
}

@inproceedings{HofmannStreicher1994,
  title = {The groupoid model refutes uniqueness of identity proofs},
  author = {Hofmann, Martin and Streicher, Thomas},
  booktitle = {Proceedings of the Ninth Annual IEEE Symposium on Logic in Computer Science (LICS)},
  pages = {208--212},
  year = {1994},
  publisher = {IEEE},
  address = {Piscataway, NJ, USA}
}

@article{BergGarner2011,
  title = {Types are weak $\omega$-groupoids},
  author = {van den Berg, Benno and Garner, Richard},
  journal = {Proceedings of the London Mathematical Society},
  volume = {102},
  number = {2},
  pages = {370--394},
  year = {2011}
}

@inproceedings{Lumsdaine2009,
  title = {Weak $\omega$-categories from intensional type theory},
  author = {Lumsdaine, Peter LeFanu},
  booktitle = {Typed Lambda Calculi and Applications},
  series = {Lecture Notes in Computer Science},
  volume = {5608},
  pages = {172--187},
  publisher = {Springer},
  address = {Berlin, Germany},
  year = {2009}
}

@incollection{Awodey2012,
  title = {Type Theory and Homotopy},
  author = {Awodey, Steve},
  booktitle = {Epistemology versus Ontology},
  series = {Logic, Epistemology, and the Unity of Science},
  volume = {27},
  editor = {Dybjer, Peter and Lindstr{"o}m, Sten and Palmgren, Erik and Sundholm, G{"o}ran},
  pages = {183--201},
  publisher = {Springer},
  address = {Dordrecht, Netherlands},
  year = {2012}
}

@book{MartinLoef1984,
  author = {Martin-L{"o}f, Per},
  title = {Intuitionistic Type Theory},
  publisher = {Bibliopolis},
  address = {Naples},
  year = {1984}
}

@phdthesis{Ramos2018Thesis,
  author = {Ramos, Arthur F.},
  title = {Explicit Computational Paths in Type Theory},
  school = {Universidade Federal de Pernambuco},
  year = {2018},
  address = {Recife, Brazil},
  url = {https://github.com/Arthur742Ramos/ComputationalPathsLean/blob/main/docs/thesis/TESE\%20Arthur\%20Freitas\%20Ramos.pdf},
  note = {Available at \url{https://github.com/Arthur742Ramos/ComputationalPathsLean/blob/main/docs/thesis/TESE\%20Arthur\%20Freitas\%20Ramos.pdf}}
}

@book{BaaderNipkow1998,
  title = {Term Rewriting and All That},
  author = {Baader, Franz and Nipkow, Tobias},
  publisher = {Cambridge University Press},
  address = {Cambridge, UK},
  year = {1998}
}

@incollection{KnuthBendix1970,
  title = {Simple Word Problems in Universal Algebras},
  author = {Knuth, Donald E. and Bendix, Peter B.},
  booktitle = {Computational Problems in Abstract Algebra},
  editor = {Leech, J.},
  publisher = {Pergamon},
  address = {Oxford, UK},
  pages = {263--297},
  year = {1970}
}

@article{Cohen2018Cubical,
  title = {Cubical Type Theory: A Constructive Interpretation of the Univalence Axiom},
  author = {Cohen, Cyril and Coquand, Thierry and Huber, Simon and M{"o}rtberg, Anders},
  journal = {Mathematical Structures in Computer Science},
  volume = {28},
  number = {7},
  pages = {1228--1269},
  year = {2018}
}

@misc{ComputationalPathsLean,
  author = {Ramos, Arthur F. and de Veras, Tiago M. L. and de Queiroz, Ruy J. G. B. and de Oliveira, Anjolina G.},
  title = {Computational Paths in Lean},
  howpublished = {\url{https://github.com/Arthur742Ramos/ComputationalPathsLean}},
  note = {Version accessed November 22, 2025},
  year = {2025}
}

@inproceedings{Ruy4,
  title = {Equality in labelled deductive systems and the functional interpretation of propositional equality},
  author = {de Queiroz, Ruy J. G. B.},
  booktitle = {Proceedings of the 9th Amsterdam Colloquium},
  year = {1994},
  pages = {547--565},
  address = {Amsterdam, Netherlands}
}

@article{leinster1,
  title = {A survey of definitions of n-category},
  author = {Leinster, Tom},
  journal = {Theory and Applications of Categories},
  volume = {10},
  pages = {1--70},
  year = {2002}
}

@article{batanin1,
  title = {Monoidal globular categories as a natural environment for the theory of weak n-categories},
  author = {Batanin, Michael A.},
  journal = {Advances in Mathematics},
  volume = {136},
  pages = {39--103},
  year = {1998}
}

@inproceedings{FinsterMimram2017,
  title = {A Type-Theoretical Definition of Weak $\omega$-Categories},
  author = {Finster, Eric and Mimram, Samuel},
  booktitle = {Proceedings of the 32nd Annual ACM/IEEE Symposium on Logic in Computer Science (LICS)},
  pages = {1--12},
  year = {2017},
  publisher = {IEEE},
  doi = {10.1109/LICS.2017.8005124},
  note = {Extended version: arXiv:1706.02866}
}

@article{BenjaminFinsterMimram2021,
  title = {Globular weak $\omega$-categories as models of a type theory},
  author = {Benjamin, Thibaut and Finster, Eric and Mimram, Samuel},
  journal = {arXiv preprint arXiv:2106.04475},
  year = {2021},
  note = {Proves the initiality conjecture for CaTT}
}

@inproceedings{LjungstromMortberg2023,
  title = {Formalizing $\pi_4(\mathbb{S}^3) \cong \mathbb{Z}/2\mathbb{Z}$ and Computing a Brunerie Number in Cubical Agda},
  author = {Ljungstr{\"o}m, Axel and M{\"o}rtberg, Anders},
  booktitle = {Proceedings of the 38th Annual ACM/IEEE Symposium on Logic in Computer Science (LICS)},
  year = {2023},
  publisher = {IEEE},
  doi = {10.1109/LICS56636.2023.10175833}
}

@article{AnnenkovCapriottiKrausSattler2023,
  title = {Two-level type theory and applications},
  author = {Annenkov, Danil and Capriotti, Paolo and Kraus, Nicolai and Sattler, Christian},
  journal = {Mathematical Structures in Computer Science},
  volume = {33},
  number = {8},
  pages = {688--743},
  year = {2023},
  publisher = {Cambridge University Press},
  doi = {10.1017/S0960129523000130}
}

@article{KrausVonRaumer2022,
  title = {A rewriting coherence theorem with applications in homotopy type theory},
  author = {Kraus, Nicolai and von Raumer, Jakob},
  journal = {Mathematical Structures in Computer Science},
  volume = {32},
  number = {7},
  pages = {982--1014},
  year = {2022},
  publisher = {Cambridge University Press},
  doi = {10.1017/S0960129522000329},
  note = {Shows how confluence and wellfoundedness yield coherence in higher-dimensional settings}
}

@inproceedings{Mimram2023Coherence,
  title = {Categorical Coherence from Term Rewriting Systems},
  author = {Mimram, Samuel},
  booktitle = {8th International Conference on Formal Structures for Computation and Deduction (FSCD 2023)},
  series = {Leibniz International Proceedings in Informatics},
  volume = {260},
  pages = {16:1--16:17},
  year = {2023},
  publisher = {Schloss Dagstuhl},
  doi = {10.4230/LIPIcs.FSCD.2023.16}
}

\end{document}